\newtheorem{theorem}{{\bf Theorem}}[section]
\newtheorem{proposition}[theorem]{{\bf Proposition}}
\theoremstyle{definition}
\newtheorem{definition}[theorem]{{\bf Definition}}%\theoremstyle{plain}
\newtheorem{claim}[theorem]{{\bf Claim}}
\theoremstyle{remark}
\newtheorem*{observation}{{\bf Observation}}
\newtheorem*{remark}{{\bf Remark}}
\newtheorem*{notation}{{\bf Notation}}
\newtheorem*{example}{{\bf Example}}
\def\llangle{\langle\!\langle} \def\rrangle{\rangle\!\rangle}
\def\eps{\varepsilon}
\begin{document}

\title{Local Commutators and Deformations in Conformal Chiral Quantum Field Theories\footnote{Supported in part by the German
    Research Foundation (Deutsche Forschungsgemeinschaft (DFG))
    through the Institutional Strategy of the University of
    G\"ottingen, and through the Research Training School 1493 ``Mathematical
    Structures in Modern Quantum Physics''}}

\author{Antonia M. Kukhtina, Karl-Henning Rehren \\[5mm] \normalsize 
Institute for Theoretical Physics, University of G\"ottingen \\ \normalsize 
Friedrich-Hund-Platz 1, D-37077 G\"ottingen \\[1mm] \normalsize and
\\[1mm] \normalsize Courant Centre ``Higher Order Structures in
Mathematics'', \\ \normalsize Bunsenstr. 3--5, D-37073 G\"ottingen} 
\maketitle

\abstract{We study the general form of M\"{o}bius covariant local
  commutation relations in conformal chiral quantum field theories and
  show that they are intrinsically determined up to structure
  constants, which are subject to an infinite system of constraints. The
  deformation theory of these commutators is controlled by a
  cohomology complex, whose cochain spaces consist of
  linear maps that are subject to a complicated symmetry property,
  a generalization of the anti-symmetry of the Lie algebra case.} 

\section{Introduction}

\normalsize
The theorem of L\"uscher and Mack \citep{LM1988}, which determined the
commutation relations of the stress--energy tensor, is an inspiring
example of how one can compute the commutators in conformal field
theory just on the basis of the most general properties of a
relativistic local quantum theory and conformal invariance. Using the
same argument one can fix the commutators of the stress--energy tensor
with an arbitrary primary field and one can
almost fix the commutators of the stress--energy tensor with a
quasiprimary field. We shall show that a similar strategy allows
to determine the commutation relations between arbitrary conformal
chiral fields (also known as ``$W$-algebras'') up to some structure
constants which we show to be subject to an infinite number of
constraints, reflecting anti--symmetry of commutators and the Jacobi
identity among smeared field operators. The solutions to these
constraints carry information about the specific model considered.

The anti--symmetry of commutators produces a symmetry rule for the
structure constants right away. However, the restrictions coming from
the Jacobi identity are not visible at once, because the different terms
there appear with different test functions and this does not allow us
to obtain relations only among the structure constants. To do this,
we study the effect of the commutator on the test function level and
observe that it gives rise to local intertwiners of the
$sl(2,\mathbb{R})$ action on the test function spaces. With the help
of transformation matrices of local intertwiners we achieve a
reduction of the field algebra, which means that we strip off the test
functions. This reduced structure has the form of a bilinear bracket
on a reduced field space. Apart from a mixed symmetry or
anti--symmetry of this bracket, its Jacobi identity involves certain   
coefficient matrices multiplying the three terms of the Jacobi
identity. These matrices are universal in the sense that they
reflect only the underlying representation theory of
$sl(2,\mathbb{R})$, but not the specific model.
They are thus constitutive elements of a new generalized
Lie--algebra-like bracket structure that can be used for the
classification of $W$-algebras.

These new identities constitute an infinite number of quadratic
constraints for the structure constants of $W$-algebras, not involving
the test functions any more. The solutions of these constraints
promote potential candidates for chiral conformal field theories. The
idea to consider constraints in such form was cherished from
\citep{bowcock91}, where a Jacobi identity among structure constants
from commutators of Fourier modes of quasiprimary fields was
considered.    
  
We then study the deformation theory of the commutators of the
reduced field algebra. The motivating example for us was
\citep{hollands08}, where deformations in the setting of the OPE
(operator product expansion) approach to quantum field theory on curved
space--time were studied. We consider deformations in a sense of
perturbative power series and work in a setting analogous to that in
\citep{gerstenhaber64}, which is the prototype of deformation theory
for algebraic structures. In all such theories the first step is to
relate the deformation problem to a certain cochain complex. In the
first examples of deformation theories of algebraic structures
\citep{gerstenhaber64}, \citep{nijenhuis67} the second step was to
show that the first cohomology groups are directly related to the
possibility to deform the algebraic structure considered. The more
modern point of view is that the deformation theory in consideration
is mastered by a differential graded Lie algebra (or in some cases a
homotopy Lie algebra or $L_{\infty}$-algebra) which can be obtained
from the cochain complex by constructing a bracket on this complex,
which is skew symmetric with respect to the grading by dimension
of the cochain spaces and satisfying a graded Jacobi identity
\citep{nijenhuis64}, \citep{manetti05}, \citep{borisov05}. 

The cochain complex, which we constructed, consists of
multilinear maps with a complicated permutation symmetry property ---
$Z^{\eps}$-symmetry (section \ref{sec:Zsym}). The origin of
this symmetry can be traced back to the symmetry rules in
the reduced algebra. We show that the first perturbations (also
infinitesimal perturbations) of the reduced brackets are classes from
the second cohomology group of our complex and we compute the
obstruction operators to their integration. We expect that an explicit
computation of the cohomology groups in the future will allow us to
relate the first of these groups to the problem of rigidity of the
bracket and the integrability of the first perturbations.

%The minimal models have a pure field content --- the only primary
%field there is the identity and the only quasiprimary is the
%stress--energy tensor. We want to explore theories with more fields,
%this is why we consider local extensions of minimal models. 
\section{Preliminaries}

The conformal group in a chiral theory is $\hbox{Diff}(S^1)$. It is
represented by a unitary representation $U$ on the Hilbert space of
the conformal field theory. A conformal chiral field $\Phi (z)$ on
$S^1$ transforms under a diffeomorphism $\gamma$ as a covariant tensor of
scaling dimension $d_\Phi$ if
\begin{equation}
 U(\gamma) \Phi(z) U^{-1}(\gamma) =
 \left(\frac{d\gamma}{dz}\right)^{d_{\Phi}} \Phi
 \left(\gamma(z)\right) \nonumber 
\end{equation}
holds. For local fields, the scaling dimension is an integer. Fields
which transform covariantly under the whole conformal group, are
called \emph{primary}. However, they do not exhaust the field content
of a theory. For example, in every conformal quantum field theory is
present the stress--energy tensor $T(x)$, which is responsible for
infinitesimal conformal transformations. $T(x)$ transforms covariantly
only under the M\"{o}bius subgroup $\hbox{SL}(2,\mathbb{R})$ of
$\hbox{Diff}(S^1)$, and such fields are called
\emph{quasiprimary}. Furthermore, in the OPE of a primary field with
$T(x)$ arise a series of other quasiprimary fields together with their
derivatives, and such fields are called \emph{secondary}. 

In all that follows, we identify $\mathbb{R}$ with
$S^1\setminus\{-1\}$ by the Cayley transform, and regard the fields as
distribution on $\mathbb{R}$. Since $A'(f)=-A(f')$, we don't 
consider the derivatives of quasiprimary fields as independent fields. 
Hence a basis of the field algebra is an infinite set of quasiprimary
fields. In a decent theory, e.g., such that $e^{-\beta L_0}$ is a
trace-class operator, the number of quasiprimary fields of a given
dimension is finite. We shall denote the basis of fields of
scaling dimension $a$ by $W_a$, and assume without loss of generality
that all $A\in W_a$ are hermitian fields. 

The commutators of the stress--energy tensor in a chiral theory are
intrinsically fixed: 
\begin{theorem}[L\"{u}scher--Mack \citep{LM1988}]
The stress--energy tensor in a chiral theory has the following
commutation relations: 
\begin{equation}
i[T(x),T(y)] =  T^{\prime}(y) \delta(x-y) -2T(y)\delta ^{\prime}(x-y)
+ \frac{c}{24} \delta ^{\prime \prime \prime}(x-y) \nonumber 
\end{equation}
\end{theorem}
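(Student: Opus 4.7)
The strategy is to combine three standard inputs: relativistic locality together with polynomial energy bounds, the scaling dimension of $T$, and Möbius covariance (plus anti-symmetry of the commutator). Because $T$ is a local Bose field, $[T(x),T(y)]$ vanishes for $x\neq y$, and the Borchers-type argument for operator-valued distributions of finite energy order forces the commutator to be supported on the diagonal and to admit a finite expansion
\[
 i[T(x),T(y)] = \sum_{k=0}^{N} A_k(y)\,\delta^{(k)}(x-y).
\]
Dimension counting---$T$ has dimension $2$ while $\delta^{(k)}(x-y)$ has dimension $k+1$---forces $\dim A_k = 3-k$, and since scaling dimensions in a unitary theory are non-negative integers the sum truncates at $N=3$.

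I would then pin down the four coefficients $A_0,A_1,A_2,A_3$. The dimension-$0$ field $A_3$ is a c-number multiple of the identity by vacuum uniqueness, which I normalise as $(c/24)\mathbf{1}$; this defines the central charge $c$. Applying anti-symmetry $[T(x),T(y)]=-[T(y),T(x)]$ to the Ansatz, using $\delta^{(k)}(y-x)=(-1)^k\delta^{(k)}(x-y)$ together with the pull-through
\[
 f(x)\delta^{(k)}(x-y) = \sum_{j=0}^{k}(-1)^j\binom{k}{j} f^{(j)}(y)\,\delta^{(k-j)}(x-y),
\]
and matching coefficients of the linearly independent distributions $\delta^{(k)}(x-y)$ gives $A_2=0$ (from the $\delta''$-match, using that $A_3$ is constant and hence $A_3'=0$) and the differential relation $2A_0=-A_1'$ (from the $\delta$-match).

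Finally, I would identify $A_0$ and $A_1$ by Möbius covariance. Since $T$ is quasiprimary of dimension $2$, its translation mode $L_{-1}=\int T(x)\,dx$ (in a suitable smearing) satisfies $i[L_{-1},T(y)]=T'(y)$. Smearing the Ansatz in $x$ against the constant function retains only the $\delta(x-y)$-term and therefore produces $A_0(y)=T'(y)$. The anti-symmetry relation then forces $A_1(y)=-2T(y)$; a putative additive integration constant would be a dimension-$0$ field and cannot contribute to the dimension-$2$ coefficient $A_1$. Assembling the four coefficients yields the asserted formula. The main obstacle is the first step: justifying the finite truncation of the diagonal expansion with local coefficients of fixed scaling dimension rigorously requires the full Borchers--Wightman apparatus together with the unitary energy bounds provided by the representation $U$. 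Once that is in place, the remaining identifications are purely algebraic.
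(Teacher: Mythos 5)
Your argument is correct and follows essentially the same strategy that the paper attributes to L\"uscher--Mack and reuses in the proof of Proposition \ref{prop:local}: locality forces a finite expansion in $\delta^{(k)}(x-y)$ with local coefficients, scaling fixes $\dim A_k=3-k$ and the unitarity bound truncates at $k=3$, and anti-symmetry plus M\"obius covariance (here just the translation generator $\int T$) pin down the coefficients. The paper itself states the theorem as a cited result without proof, so your sketch -- including the honest caveat that the finite diagonal expansion is the step requiring the full Wightman/energy-bound machinery -- is at the same level of detail as the paper's own proof sketches and contains no gap.
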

With similar technique we find the commutator of $T(x)$ with some
arbitrary primary field $\varphi (x)$: 
\begin{equation}
 i[T(x),\varphi (y)]= \varphi^{\prime} (y) \delta (x-y) -
 d_{\varphi}\,\varphi(y) \delta ^{\prime}  (x-y) \nonumber 
\end{equation}
and with some arbitrary quasiprimary field $\phi (x)$:
\begin{equation}
 i[T(x),\phi (y)]= \phi^{\prime} (y) \delta (x-y) - d_{\phi}\,\phi(y)
 \delta ^{\prime}  (x-y) +\sum_{3 \leq k\leq h+1}\delta^{(k)}(x-y)
 \phi_k(y) \nonumber 
\end{equation}
where $\phi_k (x)$ are either quasiprimary fields or derivatives of
quasiprimary fields of lower dimensions. 

\section{The general form of local commutation relations in conformal
  chiral field theories} 

In this section we will show that the commutation relations in
conformal chiral field theories are intrinsically determined up to 
numerical factors (``structure constants'') by locality, conformal
invariance and Wightman positivity, and that the Lie algebra structure
imposes further constraints on the possible values of the structure constants. 

It will be enough to find just the commutators among the basis
quasiprimary fields. Our strategy to understand the general structure
of M\"{o}bius covariant commutators in chiral conformal field theories
is similar to that of the L\"{u}scher--Mack theorem: 

\begin{proposition}\label{prop:local}
Locality, scale invariance and Wightman positivity imply the following
general form of the commutator of two smeared quasiprimary field operators
$A(f)$ and $B(g)$: 
\begin{equation}
\label{eq:CR2}
-i \left[ A(f), B(g) \right] =  \sum_{c< a+b} \;\sum_{C\in W_c} F_{AB}^C\, C\left(\lambda_{ab}^c(f,g)\right) \,, 
\end{equation}
where $a$, $b$ are the scaling dimensions of $A$ and $B$, the sum runs
over a basis of quasiprimary fields of scaling dimension $c<a+b$,
$F_{AB}^C$ are numerical coefficients, and 
\begin{equation}
\label{eq:lamfg}
\lambda_{ab}^c(f,g) = \sum_{\substack{p,q \geq 0 \\ p+q=a+b-c-1}}
\lambda^c_{ab}(p,q) \,\partial^p f \cdot \partial^q g  
\end{equation}
are bilinear maps on the test functions that preserve supports, i.e.,
$\mathrm{supp}\,\lambda_{ab}^c(f,g)\subset \mathrm{supp}\,f
\cap\mathrm{supp}\,g$. These maps depend only on the dimensions of the
fields involved. 
\end{proposition}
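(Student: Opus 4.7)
The plan is to follow the strategy of the L\"uscher--Mack theorem, combining locality and scale invariance to determine the general structure of the commutator, Wightman positivity to make the sum finite, and the M\"obius covariance hidden in the quasiprimary assumption to collapse the residual data to the scalar coefficients $F_{AB}^C$.

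First I would use locality. Since $A(x)$ and $B(y)$ commute for $x\neq y$, the operator-valued distribution $[A(x),B(y)]$ is supported on the diagonal, so by the standard structure theorem for distributions supported on a submanifold it decomposes as $[A(x),B(y)] = \sum_{n\geq 0}\delta^{(n)}(x-y)\,O_n(y)$ for operator-valued distributions $O_n$. Scale invariance then forces $O_n$ to carry scaling dimension $a+b-n-1$ (the commutator has dimension $a+b$ and $\delta^{(n)}$ has dimension $n+1$). Wightman positivity rules out non-trivial fields of negative scaling dimension --- concretely, one applies energy bounds to the vector $O_n(f)\Omega$ --- so the series truncates at $n\leq a+b-1$ and the sum is finite.

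Next I would expand each $O_n(y)$ in the basis of quasiprimary fields and their derivatives: with $C\in W_c$ and $k$ derivatives contributing total dimension $c+k$, the equality $c+k = a+b-n-1$ gives
\[ O_n(y) = \sum_{k\geq 0}\;\sum_{C\in W_{a+b-n-k-1}}\alpha_{n,k}^{C}\,\partial^k C(y). \]
Smearing with $f(x)g(y)$ and integrating by parts --- first moving the $n$ derivatives of $\delta$ onto $f$, then using the Leibniz rule to move the $k$ derivatives of $\partial^k C$ onto the product $\partial^n f\cdot g$ --- each contribution becomes a linear combination of terms $C(\partial^p f\cdot\partial^q g)$ with $p+q = a+b-c-1$, matching the form of (\ref{eq:lamfg}). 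Grouping terms with the same $C\in W_c$ then yields (\ref{eq:CR2}), and the support property $\mathrm{supp}\,\lambda_{ab}^c(f,g)\subset\mathrm{supp}\,f\cap\mathrm{supp}\,g$ is immediate from the pointwise products $\partial^p f\cdot\partial^q g$.

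The main obstacle is the universality claim. For a fixed triple $(A,B,C)$ with $c<a+b$ the coefficients $\alpha_{n,k}^{C}$ indexed by $n+k = a+b-c-1$ constitute $a+b-c$ a priori independent numbers, whereas the proposition asserts that a single scalar $F_{AB}^C$ together with coefficients $\lambda_{ab}^c(p,q)$ depending only on the dimensions accounts for all of them. Establishing this requires going beyond scale invariance and exploiting the full M\"obius covariance of quasiprimary fields: applying the translation and special conformal generators to both sides of the commutator yields recursion relations that pin down the ratios $\alpha_{n,k}^{C}/\alpha_{0,a+b-c-1}^{C}$ in terms of $a$, $b$, $c$ alone, thereby collapsing the family to one structure constant and fixing the universal numbers $\lambda_{ab}^c(p,q)$.
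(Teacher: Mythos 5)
Your proposal follows essentially the same route as the paper's own proof: locality gives the $\delta$-function decomposition on the diagonal, scale invariance fixes the dimension balance $p+q=a+b-c-1$, Wightman positivity (the unitarity bound) truncates the sum at $c\geq 0$, and the full M\"obius covariance yields a recursion that fixes the coefficients $\lambda_{ab}^c(p,q)$ up to a single scalar $F_{AB}^C$. You correctly identify the recursion from the special conformal generator as the step that collapses the a priori independent coefficients to one structure constant, which is exactly the content of step 1 of the paper's proof (the paper likewise only states the recursion and records its solution in the subsequent Observation).
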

\begin{proof}
We present here the main steps of the proof:
\begin{enumerate}
 \item Locality implies that the commutator $-i\left[A(x),B(y)\right]$ 
   has support on the line $x=y$. Then follows that $-i
   \left[ A(x), B(y) \right]=\sum_{l=0}^n
   \delta^{(l)}(x-y)O_l(y)$, where $O_l$ are linear combination of
   quasiprimary fields and their derivatives. This means that in the
   smeared version $-i \left[ A(f), B(g) \right]$ a quasiprimary field
   $C$ must appear with the test function of the form
   $\sum_{p,q \geq 0} d^{C}_{AB}(p,q) \partial^p f
   \cdot \partial^q g$. The coefficients $d^{C}_{AB}(p,q)$ satisfy a
   recursion in $p$ and $q$, coming from M\"{o}bius invariance, and
   the solution of this recursion is fixed, up to a factor, only by
   the scaling dimensions of the fields $A, \, B ,\, C$. The numerical 
   factor can be absorbed in the coefficients $F_{AB}^C$. 
 \item Scaling invariance implies that $p+q=n$ if $C(y)$ is a local field
   of scaling dimension $a+b-n-1$. 
 \item Wightman positivity implies that the scaling dimension of the
   fields on the theory must be non-negative (unitarity bound), hence $c \in
   [0,a+b-1]$. 
\end{enumerate} 
\end{proof}

\begin{observation}
The recursion for $\lambda_{ab}^c (p,q)$ coming from the
M\"{o}bius invariance (for fixed $a,b \geq 1$ and positive $c$) is solved by: 
\begin{equation}
\label{eq:lambda}
\lambda^c_{ab}(p,q)= (-1)^q \frac{(c+b-a)_p}{p!}\frac{(c+a-b)_q}{q!}\,\delta_{p+q,a+b-c-1}
\end{equation}
\end{observation}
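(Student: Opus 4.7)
\emph{Proof plan.} Since Möbius invariance is generated by $L_{-1}$, $L_0$ and $L_1$, I would obtain \eqref{eq:lambda} by imposing each of them in turn. Translation ($L_{-1}$) and dilation ($L_0$) invariance already appear in the proof of Proposition~\ref{prop:local}: they produce the ansatz \eqref{eq:lamfg} and the constraint $p+q = N := a+b-c-1$. What remains is to extract the recursion among the $\lambda^c_{ab}(p,q)$ coming from special-conformal ($L_1$) invariance and to verify that \eqref{eq:lambda} solves it.

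For any quasi-primary field $\Phi$ of dimension $d$, integration by parts against $[L_1,\Phi(x)] = (x^2\partial + 2dx)\Phi(x)$ gives $[L_1,\Phi(f)] = -\Phi(T_d f)$ with $T_d f := x^2 f' + 2(1-d)\,x\, f$. Applying $[L_1,\,\cdot\,]$ to both sides of \eqref{eq:CR2} and using the Jacobi identity on the commutator $[A(f),B(g)]$ turns Möbius covariance into the functional equation
\begin{equation}
\lambda^c_{ab}(T_a f, g) + \lambda^c_{ab}(f, T_b g) \;=\; T_c\, \lambda^c_{ab}(f,g).
\end{equation}
Substituting the ansatz and applying Leibniz to $\partial^p(x^2 f')$ and $\partial^p(x f)$ splits both sides into pieces of homogeneity $x^2$, $x$ and $x^0$. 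The $x^2$-pieces cancel on sight; the $x$-pieces cancel because $2(p+1-a) + 2(q+1-b) = 2(1-c)$ precisely when $p+q = N$; and the $x^0$-pieces, after re-indexing $p\mapsto p+1$ or $q\mapsto q+1$, yield the one-step recursion
\begin{equation}
(p+1)(p+2-2a)\,\lambda^c_{ab}(p+1,q) + (q+1)(q+2-2b)\,\lambda^c_{ab}(p,q+1) \;=\; 0
\end{equation}
along each diagonal $p+q = N-1$. Being first order, it fixes $\lambda^c_{ab}$ on the diagonal up to a single overall constant, which is absorbed in $F_{AB}^C$.

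To verify \eqref{eq:lambda}, I would substitute it directly. Using the Pochhammer shifts $(c+b-a)_{p+1} = (c+b-a+p)(c+b-a)_p$ and $(c+a-b)_{q+1} = (c+a-b+q)(c+a-b)_q$ and factoring out the common nonzero factor $(-1)^q(c+b-a)_p(c+a-b)_q/(p!\,q!)$, the recursion collapses to the algebraic identity
\begin{equation}
(c+b-a+p)(p+2-2a) \;=\; (c+a-b+q)(q+2-2b),
\end{equation}
valid whenever $p+q = a+b-c-2$. Substituting $q = a+b-c-2-p$ into the right-hand side rewrites $q+2-2b$ as $-(p+2-2a)$ and $c+a-b+q$ as $-(c+b-a+p)$, and the two minus signs cancel. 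The main obstacle is thus not the final verification but the careful bookkeeping in the Leibniz expansion: one must track the three monomials $x^2 f^{(p+1)}g^{(q)}$, $x f^{(p)} g^{(q)}$ and $f^{(p-1)}g^{(q)}$ in $\partial^p(T_a f)\,g^{(q)}$ (and symmetrically for the $g$-side), and check that the first two rows are consumed by $T_c$ acting on $\partial^p f\cdot\partial^q g$. Once this is done the recursion and its solution are a one-liner.
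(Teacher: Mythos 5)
Your proposal is correct and follows the same route the paper only sketches: the recursion is exactly the one generated by the special-conformal generator $k$ of the representation $\pi_a$ (translation and dilation covariance being already built into the ansatz and the constraint $p+q=a+b-c-1$), and your one-step relation $(p+1)(p+2-2a)\,\lambda^c_{ab}(p+1,q)+(q+1)(q+2-2b)\,\lambda^c_{ab}(p,q+1)=0$ together with the Pochhammer verification checks out. The only cosmetic point is that the bracketed identity holds as a polynomial identity in $p$, so nothing is lost at the degenerate integer dimensions where the ``common factor'' you divide by vanishes.
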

where $(x)_n$ denotes the Pochhammer symbol:
\begin{equation}
 (x)_n:= \frac{\Gamma (x+n)}{ \Gamma (x)}.
\end{equation}
In particular, the maps $\lambda_{ab}^c (f,g)
= \sum_{p+q=a+b-c-1}\lambda_{ab}^c (p,q) \, \partial^p f
\cdot \partial^q g$ enjoy the graded symmetry property
\begin{equation}\label{eq:symm}
\lambda_{ab}^c (f,g) = (-1)^{a+b-c-1}\cdot\lambda_{ba}^c (g,f).
\end{equation}
Note that this (anti)symmetry respects the $\mathbb{Z}_2$ grading of
the source and range spaces, but the system of bilinear maps
$\lambda_{ab}^c$ themselves don't: there is no condition on
$c$ apart from $c<a+b$.

It is noteworthy to recognize that $\lambda_{ab}^c$ coincide with the
Rankin--Cohen brackets arising in the theory of modular forms. The
latter are bilinear differential maps $[f,g]_n:M_{2k}\times M_{2l}\to
M_{2k+2l+2n}$ on the spaces of modular forms of weights $2k$, $2l$
(\citep{rankin56,cohen75,cmz96}). In this context, of course, the test
functions have to be replaced by modular forms, and the emphasis is on
the discrete subgroup $\hbox{SL}(2,\mathbb{Z})$ of
$\hbox{SL}(2,\mathbb{R})$, under which modular forms are invariant.    
The precise relation is (with notations as in \citep{cmz96})
\begin{equation}
\label{eq:RC}
\lambda^c_{ab}(f,g) \equiv \left[f,g\right]_{n=a+b-c-1}^{(k=1-a,l=1-b)}\,. 
\end{equation}
We will give some more comments in App.\ \ref{sec:app}.

It becomes clear that the overall structure of the commutators in
conformal chiral field theories is to a great extent fixed -- we know
fields of which dimensions contribute to the commutator of any pair of
fields and with which test functions these fields are smeared. The
only unknown ingredients are the structure constants $F_{AB}^C$,
which are numbers. We shall now investigate further restrictions of
the structure constants due to the Lie algebra structure relations of
the commutator.  
\begin{observation}
The anti--symmetry of commutators together with the symmetry property
(\ref{eq:symm}) of $\lambda_{ab}^c$ implies the following symmetry
rule for the structure constants: 
\begin{equation}
\label{eq:Fgsym}
F^C_{AB} = (-1)^{a+b-c} \, F^C_{BA}
\end{equation}
\end{observation}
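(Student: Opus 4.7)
The plan is to simply compare the two representations of the same commutator that arise from applying Proposition \ref{prop:local} once with $(A,f)$ in the first slot and once with $(B,g)$ in the first slot. Concretely, I would start from the operator identity $[A(f),B(g)] = -[B(g),A(f)]$, i.e.\
\begin{equation*}
-i[A(f),B(g)] \;=\; -\bigl(-i[B(g),A(f)]\bigr).
\end{equation*}
Expanding each side by formula (\ref{eq:CR2}) yields
\begin{equation*}
\sum_{c<a+b}\sum_{C\in W_c} F^C_{AB}\, C\bigl(\lambda^c_{ab}(f,g)\bigr)
\;=\;
-\sum_{c<a+b}\sum_{C\in W_c} F^C_{BA}\, C\bigl(\lambda^c_{ba}(g,f)\bigr).
\end{equation*}

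Next I would use the graded symmetry property (\ref{eq:symm}), which gives $\lambda^c_{ba}(g,f) = (-1)^{a+b-c-1}\lambda^c_{ab}(f,g)$, to rewrite the right-hand side with the same test-function argument as on the left:
\begin{equation*}
\sum_{c,C} F^C_{AB}\, C\bigl(\lambda^c_{ab}(f,g)\bigr)
\;=\;
\sum_{c,C} (-1)^{a+b-c}\, F^C_{BA}\, C\bigl(\lambda^c_{ab}(f,g)\bigr).
\end{equation*}

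The final step is to separate coefficients. Since by assumption the fields $\{C\}_{C\in W_c}$ form a basis of the space of quasiprimary fields of dimension $c$ (distinct scaling dimensions label orthogonal sectors under $L_0$, and within a fixed dimension the $C$'s are linearly independent by choice of basis), and since the bilinear maps $\lambda^c_{ab}(\cdot,\cdot)$ exhaust a nontrivial range of test functions as $f,g$ vary over $C_c^\infty(\mathbb{R})$, the two sums can be matched term by term. Equating the coefficients of $C(\lambda^c_{ab}(f,g))$ on both sides yields $F^C_{AB} = (-1)^{a+b-c}F^C_{BA}$, as desired.

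The only step that requires real care is the separation argument in the last paragraph: one must check that the distributions $C(\lambda^c_{ab}(f,g))$, indexed by $(c,C)$, remain linearly independent as operator-valued distributions in $(f,g)$, so that a vanishing linear combination forces each coefficient to vanish. This follows because different $c$ correspond to different homogeneity degrees in the scaling action while different $C\in W_c$ are independent by the very choice of a basis, and for fixed $(c,C)$ the map $(f,g)\mapsto \lambda^c_{ab}(f,g)$ is not identically zero. Everything else is bookkeeping with the sign $(-1)^{a+b-c-1}$ picked up from (\ref{eq:symm}) combined with the extra minus sign from anti-symmetry of the commutator.
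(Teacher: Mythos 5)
Your argument is correct and is precisely the (implicit) reasoning behind the paper's Observation, which is stated without a written-out proof: expand both sides of $[A(f),B(g)]=-[B(g),A(f)]$ via (\ref{eq:CR2}), flip $\lambda^c_{ba}(g,f)$ to $(-1)^{a+b-c-1}\lambda^c_{ab}(f,g)$ using (\ref{eq:symm}), and match coefficients using the linear independence of the basis fields $C$ across and within dimensions. Your extra care about the separation argument (nonvanishing of $\lambda^c_{ab}$ and independence of the $C$'s) is a sound addition rather than a deviation.
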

Taking adjoints, and recalling that the basis consists of hermitian
fields, one finds that $F^C_{AB}$ are real numbers.

Further restrictions for the structure constants $F_{AB}^C$ arise from
the Jacobi identity for commutators of smeared field operators, as we
will see in section \ref{sec:RJI}. We cannot derive these restrictions
directly, because the Jacobi identity in its original form would
produce constraints burdened with test functions. A reduction of the
field algebra, performed in section \ref{sec:redalg}, will allow us to
strip off the test functions from the Jacobi identity and to achieve a
reduced Jacobi identity involving only the structure constants $F_{AB}^C$. 
%To prepare the ground for that, in the next subsections we study the
%effect of the commutator on the test functions level.  

The $F_{AB}^C$ are also related to the amplitudes of $2$- and $3$-point
functions as we will elaborate in section \ref{sec:2p}. 

\subsection{$\lambda _{ab} ^c$ are intertwiners}

Quasiprimary fields of scaling dimension $a$ extend to a larger test
function space than just the Schwartz functions, namely to the space
$\pi_a$ of smooth functions on $\mathbb{R}$ for which
$x^{2-2a}f(x^{-1})$ extends smoothly to $x=0$. We regard this space as
a representation of $sl(2,\mathbb{R})$ with generators $p$, $d$, and
$k$ such that: 
\begin{equation}
 (pf)(x)=i\partial f(x), \quad \quad (df)(x) = i(x\partial + 1 - a)f(x),
 \quad \quad (kf)(x)=i(x^2 \partial +2(1 - a)x)f(x) 
\end{equation}
We must remark that $\pi_a$ is neither irreducible nor unitary. 
In particular, the inner product induced by the $2$-point function
annihilates the $(2a-1)$-dimensional subspace of polynomials of order
$2a-2$. 

The direct product $\pi_a \times \pi_b$ equals $\pi_a \otimes \pi_b$
as a space and carries the representation $(\pi_a \otimes \pi_b)
\circ \Delta$, where the $\Delta$ is the Lie algebra coproduct. 

Then the maps
\begin{equation}
\lambda^c_{ab}:\pi_a \times \pi_b \rightarrow \pi_c , \quad \quad
f \otimes g \mapsto \lambda_{ab}^c(f,g)=\sum_{p+q=a+b-c-1}
\lambda^c_{ab} (p,q) \, \partial^p f \cdot \partial^q g
\end{equation}
intertwine the corresponding $sl(2, \mathbb{R})$ actions on the spaces of test
functions. 
Their distinguishing feature among all such intertwiners is that they
preserve supports (see above), for which we call them 
\emph{local intertwiners}. The constructive argument in the proof of
Prop.\ \ref{prop:local} means that they are actually the unique local
intertwiners of the $sl(2,\mathbb{R})$ action. Therefore, our task
will be to understand the category of representations $\pi_a$ of
$sl(2,\mathbb{R})$ equipped with the local intertwiners.  

\subsection{Bases for the intertwiner spaces}
\label{sec:intsp}

One important observation is that the bound $c < a+b$ for $\lambda_{ab}^c$ 
guarantees that the intertwiner spaces $\pi_{a_1} \times \pi_{a_2}
\times ... \times \pi_{a_n} \rightarrow \pi_e$, where $e <
\sum_{i=1}^n a_i -n$, are finite--dimensional. In this
subsection we will construct bases for the intertwiner spaces and
describe the relevant matrices for a switch between bases. 

Our ``default'' choice of basis, adapted to the structures which
appear in our calculations (nested commutators), will be the following: 
\begin{definition}[Default basis for intertwiners 
$\pi_{a_1} \times \pi_{a_2} \times ... \times \pi_{a_n} \rightarrow \pi_e$]
%$\Big(T_{\underline{a}_n } \Big)^{\underline{m}_{n-1}}$] 
We define the operators:
%\footnotesize
\begin{eqnarray}
\label{eq:basT}
\Big(T_{\underline{a}_n} \Big)^{\underline{m}_{n-1}} =
\lambda^{e}_{a_1 \eps_1} \circ \left( 1_{a_1} \times 
\lambda^{\eps_1}_{a_2 \eps_2} \circ \left( 1_{a_1} 
\times 1_{a_2} \times \lambda^{\eps_2}_{a_3 \eps_3} 
\circ \left( ... \circ \left( 1_{a_1} \times ...
\times 1_{a_{n-2}} \times \lambda^{\eps_{n-2}}_{a_{n-1} a_n}
\right) ... \right)  \right) \right). 
\end{eqnarray}
%\normalsize
Here $\underline{x}_n$ stands for $n$-tuples $(x_1,...,x_n)$,
$\underline{a}_n$ is the $n$-tuple of scaling dimensions $a_i$ and the
indices $m_i \in \mathbb{N}_0$ are related to the scaling dimensions as: 
\begin{eqnarray}
 &&m_{n-1}:= a_{n-1} + a_n - \eps_{n-2} -1, \quad m_1 = a_1+ \eps_1 - e -1, \nonumber \\
 &&m_i:=a_i + \eps_{i} - \eps_{i-1} -1 \quad \hbox{for} \quad i=2...n-2
\end{eqnarray}
%The multi-index $\underline{e} _{n-1}$ of the basis elements cannot take arbitrary values. Its components 
Then the set of operators
$\big(T_{\underline{a}_n}\big)^{\underline{m}_{n-1}}$, such that  $m_1
+ ... + m_{n-1} = M(\underline{a}_n,e)\equiv \sum_{i=1} ^n a_i -
e-n+1$, constitute a basis for the intertwiner space $\pi _{a_1}
\times \pi _{a_2} \times ... \times \pi _{a_n} \rightarrow \pi _e$. 
%$m_1,...,m_{n-1}$ are varying such that their sum $m_1 + ... + m_{n-1} \in \, [0, \, \sum_i a_i - e_i-n+1]$.
\end{definition}

\begin{observation}
The $n-1$-tuple $\underline{m}_{n-1}$ determines the values of
the scaling dimensions $\eps_i$ and $e$ of the intermediate
and final representations:
\begin{equation}\label{eq:eps}
\eps_i =  \sum_{s=i+1} ^n a_s - \sum_{t=i+1}^{n-1} m_t
-n+i+1, \qquad e =  \sum_{s=1} ^n a_s - \sum_{t=1} ^{n-1} m_t -n+1
%\, , \qquad \quad e_i \leq a_n + a_n-1 + ... + a_i - n + i
\end{equation}
They are subject to restrictions, originating from the bound $c < a+b$
for $\lambda^c_{ab}$:  
\begin{equation}
 \eps_{n-2} \leq a_{n-1} + a_n -1, \quad \eps _1 \geq
 e-a_1 + 1, \quad \eps_i \leq \sum_{k=i+1}^n a_k - n+i +1
 \quad \hbox{for} \quad i=1...n-3 
\end{equation}
It should be noted that some of the dimensions $\eps_i$ may be 
negative. We shall ignore the unitarity bound (admitting only
nonnegative dimensions) at this point. It will be imposed later
(Sect.\ \ref{sec:axio}).
\end{observation}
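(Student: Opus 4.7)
The statement has two parts: (a) the explicit inversion formulas~(\ref{eq:eps}) for $\eps_i$ and $e$ in terms of the $m_t$ and $a_s$, and (b) the displayed inequalities coming from the constraint $c<a+b$ inherent in every intertwiner $\lambda^c_{ab}$. My plan is to unwind the nested structure of $\big(T_{\underline{a}_n}\big)^{\underline{m}_{n-1}}$ step by step; both parts will then fall out as essentially bookkeeping.

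For (a), I would solve the triangular system of defining relations for the $m_i$ by back-substitution, starting from the innermost factor. The outer relation $m_{n-1}=a_{n-1}+a_n-\eps_{n-2}-1$ directly gives $\eps_{n-2}=a_{n-1}+a_n-m_{n-1}-1$, which agrees with~(\ref{eq:eps}) at $i=n-2$. I would then proceed by downward induction on $i$: assuming the formula for $\eps_{i+1}$, the defining relation $m_{i+1}=a_{i+1}+\eps_{i+1}-\eps_i-1$ yields $\eps_i=a_{i+1}+\eps_{i+1}-m_{i+1}-1$, and substituting the inductive expression produces the telescoping sum $\eps_i=\sum_{s=i+1}^n a_s-\sum_{t=i+1}^{n-1}m_t-(n-i-1)$, matching the claimed formula. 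The formula for $e$ will follow in the same fashion from $m_1=a_1+\eps_1-e-1$ together with the expression just obtained for $\eps_1$.

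For (b), my plan is to apply the bound $c<a+b$ from Prop.~\ref{prop:local} to each nested intertwiner $\lambda^c_{ab}$ appearing in~(\ref{eq:basT}). The innermost factor $\lambda^{\eps_{n-2}}_{a_{n-1}\,a_n}$ yields the first stated inequality $\eps_{n-2}\leq a_{n-1}+a_n-1$, while the outermost factor $\lambda^{e}_{a_1\,\eps_1}$ gives $e<a_1+\eps_1$, equivalently $\eps_1\geq e-a_1+1$. For the intermediate range $i=1,\dots,n-3$, the factor $\lambda^{\eps_i}_{a_{i+1}\,\eps_{i+1}}$ imposes $\eps_i<a_{i+1}+\eps_{i+1}$, which is precisely the statement that $m_{i+1}\geq 0$; combined with the closed-form expression for $\eps_i$ from part~(a) and the fact that all $m_t$ are non-negative integers, one reads off $\eps_i\leq\sum_{k=i+1}^n a_k - n + i + 1$.

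No serious obstacle should arise: everything reduces to inverting a triangular linear system and reading off inequalities level by level. The only care required is in keeping the telescoping indices straight and in recognising that the two equivalent formulations of the positivity constraint---as $m_t\geq 0$ at the level of test-function exponents, or as $c<a+b$ at the level of intertwiner dimensions---agree at every level of nesting.
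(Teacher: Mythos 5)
Your proposal is correct and follows the same (and essentially only) route the paper implicitly takes: the observation is an immediate back-substitution of the triangular defining relations for the $m_i$, and the stated inequalities are just the non-negativity of the $m_t$ (equivalently, the bound $c<a+b$ applied to each nested intertwiner) read off through those closed-form expressions. The paper records this as an Observation without a written proof, and your telescoping computation reproduces exactly the intended derivation, including the correct identification of the two equivalent formulations of the constraint.
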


\begin{remark}
The operators $\big(T_{\underline{a}_n}\big)^{\underline{m}_{n-1}}$
are multilinear maps on functions $(f_1,...,f_n)$ such that $f_i \in
\pi_{a_i}$. The images
$\big(T_{\underline{a}_n}\big)^{\underline{m}_{n-1}} (f_1,...,f_n)$  
are test functions belonging to the space $\pi_{e}$ ($e$ as in \eqref{eq:eps}). 
\end{remark}

Occasionally it will be necessary to consider nested brackets in
different order.
\begin{example}
An alternative basis for the intertwiner space $\pi_a \times \pi_b \times \pi_c \rightarrow \pi_e$ is:
\begin{equation}
 \Big( T_{S,abc} \Big)^{m_1m_2}:= \lambda_{\eps c}^e
 \circ(\lambda_{ab}^{\eps} \times 1_c), \quad
 m_1+m_2=M(a,b,c;e) = a+b+c-e-2
\end{equation}
\end{example}
In the general case, one may specify a ``bracket scheme'' $B$ and denote
the corresponding basis of intertwiners by 
$\big(T_{B, \underline{a}_n}\big)^{\underline{m}_{n-1}}$.

\subsubsection{Transformation matrices}\label{sec:XY}
From (\ref{eq:symm}) one immediately has
\begin{equation}\label{eq:flip}
 \Big( T_{abc} \Big)^{m_1m_2}(f,g,h) = (-1)^{m_1}\Big( T_{S,bca} \Big)^{m_1m_2}(g,h,f) = (-1)^{m_2} \Big( T_{acb} \Big)^{m_1m_2}(f,h,g).
\end{equation}

For the analysis of the Jacobi identity, however, we shall need
relations among $\big( T_{abc} \big)^{m_1m_2}(f,g,h)$ and 
$\big( T_{bca} \big)^{m_1m_2}(g,h,f)$ and $\big( T_{cab}
\big)^{m_1m_2}(h,f,g)$, not covered by (\ref{eq:flip}). In this
subsection we introduce the
transformation matrices for general permutations and re-bracketings.

%An important role in our computations will play the matrix $\big(Z _{ B_1B_2, \underline{a} _{n}, \sigma _{\underline{i} _n}}\big) ^{\underline{\widetilde{m}}_{n-1}} _{\underline{m}_{n-1}}$.
 
\begin{definition}[The matrix $\big(Z_{ B_1B_2, \underline{a}_{n},
    \sigma_{\underline{i}_{n}}}\big)^{\underline{\widetilde{m}}_{n-1}}_{\underline{m}_{n-1}}$] 
\label{def:Tb}
 Let us define the matrix $\big(Z_{ B_1B_2, \underline{a}_{n},
   \sigma_{\underline{i}_n}
 }\big)^{\underline{\widetilde{m}}_{n-1}}_{\underline{m}_{n-1}}$ which
 relates two bases $T_{B_1}$ and $T_{B_2}$ with permuted arguments:
\begin{equation} \label{eq:Tb}
 \Big(T_{B_1, \sigma_{\underline{i}_n}(
   \underline{a}_n)}\Big)^{\underline{\widetilde{m}}_{n-1}}\circ
 \tau_{\sigma_{\underline{i}_n}} = \Big(Z_{ B_1B_2, \underline{a}_{n}, \sigma_{\underline{i}_n}}\Big)
 ^{\underline{\widetilde{m}}_{n-1}}_{\underline{m}_{n-1}} \Big(T_{B_2,  \underline{a}_n}\Big)^{\underline{m}_{n-1}} 
\end{equation}
where $\sigma_{\underline{i}_n}$ is the permutation of labels $(x_1,...,x_n)
\mapsto (x_{i_1},...,x_{i_n})$ and $\tau_{\sigma_{\underline{i}_n}}:
\, (f_1,...,f_n) \mapsto (f_{i_1},...,f_{i_n})$ the corresponding
permutation on $\pi_{a_1}\times\dots\times\pi_{a_n}$.
In other words, permutations act on intertwiner spaces 
$\pi_{a_1}\times\dots\times\pi_{a_n}\to \pi_e$ by permutation of the factors, 
$\sigma(T):=T\circ \tau_\sigma$, and $\big(Z_{ B_1B_2, \underline{a}_{n}, 
\sigma_{\underline{i}_n}}\big)^{\underline{\widetilde{m}}_{n-1}}_{\underline{m}_{n-1}}$
are the  matrix elements of these linear maps between intertwiner
spaces in various bases of the latter. 
\end{definition}

Of particular interest for us will be the matrix
$\big(Y_{bca}\big)^{\widetilde{m}_1 \widetilde{m}_2}_{m_1 m_2}$
which describes the cyclic permutations of $\big(T_{abc}\big)^{m_1 m_2}
(f,g,h)$: 
\begin{equation}
\label{eq:Y}
  \Big(T_{bca}\Big)^{\widetilde{m}_1 \widetilde{m}_2} (g,h,f)=
\Big(Y_{bca} \Big)^{\widetilde{m}_1 \widetilde{m}_2}_{m_1 m_2}\Big(T_{abc}\Big)^{m_1 m_2} (f,g,h)
\end{equation}
By (\ref{eq:flip}), the transposition of the last two entries is
described by the diagonal matrix 
\begin{equation}
\label{eq:I}
I^{\widetilde{m}_1 \widetilde{m}_2}_{m_1 m_2} := \delta^{\widetilde{m}_1}_{m_1} \delta^{\widetilde{m}_2}_{m_2} (-1)^{m_2}.
\end{equation}
From the definition follows directly that $Y_{abc} \cdot Y_{cab} \cdot Y_{bca}
=1$ and $Y_{abc}\cdot I \cdot Y_{cba}\cdot I=1$, i.e., the matrices $Y$ and $I$
generate a representation of $S_3$. In particular, we have 
\begin{equation} \label{eq:transp}
T_{bac}(g,f,h)=I\,T_{bca}(g,h,f)=IY_{bca}\,T_{abc}(f,g,h).
\end{equation}
A calculation and explicit expression for the (quite complicated)
matrix elements $\big(Y_{abc}\big)_{\widetilde{m}_1 \widetilde{m}_2}^{m_1 m_2}$ 
can be found in the App.\ \ref{sec:app}. 

This matrix is closely related to the matrix that describes the
passage from the basis $\big(T_{abc}\big)^{m_1 m_2}$ to the basis
$\big(T_{S,abc}\big)^{m_1 m_2}$ without a permutation (``re-bracketing''): 
\begin{equation}
\Big(T_{abc}\Big)^{m_1m_2}(f,g,h)=\Big(X_{abc}\Big)^{m_1m_2}_{\widetilde{m}_1 \widetilde{m}_2} \Big(T_{S,abc}\Big)^{\widetilde{m}_1 \widetilde{m}_2}(f,g,h).
\end{equation}
Namely, by (\ref{eq:flip}), one has $T_{S,abc}(f,g,h)=
(-1)^{M}\,I\,T_{cab}(h,f,g) = (-1)^{M}\,I\,Y_{abc}^{-1}\,T_{abc}(f,g,h)$, where
$M=M(a,b,c;e) = a+b+c-e-2$ $(= m_1+m_2)$, hence 
\begin{equation} \label{X=YI}
X_{abc} = (-1)^{M(a,b,c;e)}\cdot Y_{abc}\, I.
\end{equation}

We claim that the matrix elements
$\big(Y_{abc}\big)_{\widetilde{m}_1 \widetilde{m}_2}^{m_1 m_2}$ are
the building blocks of every matrix element $(Z_{B_1B_2, \underline{a}_n,
  \sigma_{\underline{i}}})^{\underline{\widetilde{m}}_{n-1}}_{\underline{m}_{n-1}}$.  
Namely, one can achieve every bracket scheme from the default bracket
scheme (\ref{eq:basT}) by a sequence of applications of
(\ref{eq:symm}) (``flips''), at the price of a permutation of the
arguments. The flips will produce signs $(-1)^{m_i}$ where the label
$m_i$ refers to the flipped intertwiner. Now, the permutations can be
undone by a sequence of transpositions without changing the bracket
scheme. One sees from (\ref{eq:transp}) that in the default basis
(\ref{eq:basT}) the transposition $k\leftrightarrow k+1$ is described
by the matrix $\big(IY_{a_{k+1}\eps_{k+1} a_{k}}\big)^{m_k
  m_{k+1}}_{\widetilde m_k \widetilde m_{k+1}}\cdot \prod_{j\neq
  k,k+1}\delta^{m_j}_{\widetilde m_j}$.

\subsection{Reduction of the field algebra}
\label{sec:redalg} 

The field algebra, which we will denote with ${\cal V}$, decomposes as
a linear space into a direct sum of representations via commutators of
$sl(2,\mathbb{R})$, which is a subalgebra of ${\cal V}$: 
\begin{equation}
{\cal V} = \bigoplus_{a\in \mathbb{N}} {\cal V}_a
\end{equation}
% such that every subspace ${\cal V}_a$ is a span of  
Every subspace ${\cal V}_a$ is a span of (finitely many) quasiprimary
fields with the same integer scaling dimension $a>0$ and is isomorphic
to $V_a \otimes \pi_a$. As in subsection \ref{sec:intsp}, $\pi_a$ is a
test function space, which is a representation space for
$sl(2,\mathbb{R})$. 
$V_a$ is a finite--dimensional multiplicity space with basis $W_a$,
which accounts for the number of fields with scaling dimension $a$. 
The isomorphism above is realized by the map $\phi_a$ which acts as: 
\begin{equation}
\label{eq:iso}
\phi_a: \, A \otimes f \rightarrow A(f), \quad A \in V_a, \quad f \in \pi_a\,.
\end{equation}
We leave out the identity operator $I$ (of dimension $a=0$) from the
reduced space for several reasons: first, (\ref{eq:iso}) fails to
be an isomorphism in this case because $I(f) = (\int f(x)dx)\cdot 1$
depends only on the integral of $f$. Second, the unit operator is
central in the field algebra, so its commutator with other fields
contains no information. Third, the contribution of the unit operator
to the commutator of two fields is completely determined by the
$2$-point function, which we shall treat as an independent structure
element in Sect.\ \ref{sec:2p}. 

\begin{definition}[The reduced space $V$]
The direct sum of all multiplicity spaces $V= \bigoplus_{a\in
  \mathbb{N}} V_a$ will be called the reduced space $V$.  
\end{definition}

In the following we will show that the Lie algebra structure of 
${\cal V}$ is enciphered into multi--component structures on the reduced
space $V$. 
\begin{definition}[The reduced Lie bracket $\Gamma^{\ast} (\cdot , \cdot)_m$]
\label{def:rla}
On the reduced space $V = \bigoplus_a V_a$ the commutator
$[\cdot,\cdot]$ in ${\cal V}$ is represented by the
multi--component $\ast$-bracket $[\cdot,\cdot]^{\ast}_m$ or $\Gamma
^{\ast}(\cdot , \cdot)_m: V_a\times V_b\to V_{a+b-1-m}$, $m\geq0$: 
\begin{equation} \label{eq:rla} 
\Gamma ^{\ast}(A, B)_m := \sum_{C\in W_{a+b-1-m}} F_{AB}^C\, C\, .
\end{equation}
\end{definition}
Indeed, if we rewrite the Lie commutators (\ref{eq:CR2}) using
(\ref{eq:iso}) we find (suppressing the detailed form of the
contribution from the unit operator)  
\begin{eqnarray}
\label{eq:redbr}
-i[\phi_a (A \otimes f), \phi_b (B \otimes g)] & = & \sum_{c<a+b}
\phi_c \Big(\sum_{C\in W_c}F_{AB}^C \, 
C \otimes \lambda_{ab}^c (f,g)\Big) + (\hbox{unit operator})\nonumber \\ 
& = &\sum_{c<a+b} \phi_c \big(\Gamma^{\ast} (A,B)_{m=a+b-1-c} \otimes
  \lambda_{ab}^c (f,g) \big) + (\hbox{unit operator}).\qquad 
\end{eqnarray}
\begin{observation}
The anti--symmetry property of the commutator is encoded in the
graded symmetry property of the $\ast$-bracket: 
\begin{equation}
\label{eq:Ggsym}
\Gamma^{\ast}(X_1,X_2)_m=(-1)^{m+1} \Gamma ^{\ast}(X_1,X_2)_m \,.
\end{equation}
(\ref{eq:Ggsym}) actually reproduces the graded symmetry of the
structure constants $F_{AB}^C$ (\ref{eq:Fgsym}). 
\end{observation}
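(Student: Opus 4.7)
The plan is to derive the claimed graded symmetry by imposing the anti--symmetry of the commutator, $[A(f),B(g)]=-[B(g),A(f)]$, on the reduced form (\ref{eq:redbr}) and then matching terms of equal scaling dimension on both sides.

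First, I would expand both $-i[A(f),B(g)]$ and $-i[B(g),A(f)]$ by means of (\ref{eq:redbr}). The former equals $\sum_{c<a+b}\phi_c\big(\Gamma^{\ast}(A,B)_{a+b-1-c}\otimes\lambda^c_{ab}(f,g)\big)$ (modulo the unit--operator contribution), while the latter, obtained by swapping the roles $(A,f)\leftrightarrow(B,g)$, equals $\sum_{c<a+b}\phi_c\big(\Gamma^{\ast}(B,A)_{a+b-1-c}\otimes\lambda^c_{ba}(g,f)\big)$. The anti--symmetry of the commutator then forces these two sums to cancel. The unit operator is central and its contribution automatically respects anti--symmetry via the two--point function, so it decouples from the argument.

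Next I would use the graded symmetry (\ref{eq:symm}) to rewrite $\lambda^c_{ba}(g,f)=(-1)^{a+b-c-1}\lambda^c_{ab}(f,g)$ in the second sum, yielding a common test--function factor for each $c$. Because the fields $C\in W_c$ are linearly independent and the isomorphism (\ref{eq:iso}) is injective on each summand $V_c\otimes\pi_c$, I can probe arbitrary components of $\lambda^c_{ab}(f,g)\in\pi_c$ by varying $f$ and $g$ and thus match coefficients for each fixed $c$. Setting $m:=a+b-1-c$ this produces
\begin{equation}
\Gamma^{\ast}(A,B)_m \;=\; -(-1)^{a+b-c-1}\,\Gamma^{\ast}(B,A)_m \;=\; (-1)^{m+1}\,\Gamma^{\ast}(B,A)_m,
\nonumber
\end{equation}
which is the asserted identity (with the understood correction that the right--hand side of (\ref{eq:Ggsym}) involves $\Gamma^{\ast}(X_2,X_1)_m$).

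Finally, to see that this reproduces (\ref{eq:Fgsym}), I substitute the definition $\Gamma^{\ast}(A,B)_m=\sum_{C\in W_{a+b-1-m}}F^C_{AB}\,C$ and note that $c=a+b-1-m$ turns the sign $(-1)^{m+1}$ into $(-1)^{a+b-c}$; linear independence of the basis $W_c$ then yields $F^C_{AB}=(-1)^{a+b-c}F^C_{BA}$. The only delicate step is the coefficient--matching argument, and the only thing to check carefully is that the test functions $\lambda^c_{ab}(f,g)$ indeed span enough of $\pi_c$ to separate distinct components of $V_c$; this follows from the surjectivity of the Rankin--Cohen maps onto their image, combined with the fact that we work component--by--component at fixed $c$.
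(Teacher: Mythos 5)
Your proof is correct and follows essentially the same (implicit) route as the paper: impose commutator anti--symmetry on the reduced form (\ref{eq:redbr}), use the graded symmetry (\ref{eq:symm}) of $\lambda^c_{ab}$, and match coefficients at fixed $c=a+b-1-m$ to recover (\ref{eq:Fgsym}). You also correctly spotted that the right--hand side of (\ref{eq:Ggsym}) as printed should read $\Gamma^{\ast}(X_2,X_1)_m$, which is indeed a typo in the statement.
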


%Naturally comes the definition:
%\begin{definition}[The reduced Lie algebra $\{V, \Gamma ^{\ast}
%(\cdot , \cdot) _m\}$]
%The reduced space $V$ together with the reduced Lie bracket $\Gamma ^{\ast}
%(\cdot , \cdot) _m$ form an algebra, which we will call the reduced Lie algebra.
%\end{definition}
\begin{remark}
The reduction of the algebra may be interpreted as disentangling the $sl(2,
\mathbb{R})$ ``kinematic'' representation details from the structure
constants $F_{AB}^C$. The former are completely dictated by the
conformal symmetry, whereas the latter specify the model (together
with the dimensions $\mathrm{dim}\, V_a$). 
\end{remark}

In order to perform a complete reduction of the field algebra ${\cal
  V}$ we must also ``reduce'' the Jacobi identity and this will be
done in the next section.

\subsection{The reduced Jacobi identity and further constraints on $F_{AB}^C$}
\label{sec:RJI}

In this section we will examine what becomes of the Jacobi identity of
commutators under the ``space reduction''. In this way we will
complete the reduction of the field algebra and we will find further
restrictions on the coefficients $F_{AB}^C$. 

The Jacobi identity in its full form between three quasiprimary fields $A(f) \in {\cal V}_a$, $B(g) \in {\cal V}_b$ and $C(h) \in {\cal V}_c$ is:
\begin{equation}
  \Big[A(f), \big[B(g),C(h)\big]\Big] +  \Big[B(g), \big[C(h),A(f)\big]\Big]  +  \Big[C(h), \big[A(f),B(g)\big]\Big] = 0
\end{equation}
Now let us concentrate on the first term. As in (\ref{eq:redbr}), we
want to detach the test function contribution from the operator
part. Using the construction of intertwiners for multiple products of
representations (\ref{eq:basT}) and the relation (\ref{eq:iso}) we
write:
%Having in mind the relation, and  one gets exactly:
\begin{eqnarray} 
\Big[A(f), \big[B(g),C(h)\big]\Big] & \cong &
\sum_{m_1 m_2}  \Gamma^{\ast}\Big(A, \Gamma^{\ast} (B,C)\Big)_{m_1 m_2} \otimes \Big(T_{abc}\Big)^{m_1  m_2} (f,g,h) \\
\Gamma^{\ast} \Big(A, \Gamma^{\ast} (B,C)\Big)_{m_1 m_2} & = &
\sum_{\substack{E_1\in W_{e_1}= W_{a+b+c-m_1 - m_2 -2} \\  E_2 
\in W_{e_2} = W_{b+c-m_2-1}}} F^{E_1}_{A E_2} F^{E_2}_{BC} E_1  
\end{eqnarray}
Here and everywhere in the rest of this section the relation between
the $e$'s and the $m$'s are as in section \ref{sec:intsp}. 

The same considerations for the second and third terms yield similar
expressions but with $T_{bca}(g,h,f)$ and $T_{cab}(h,f,g)$ in the last
tensor factor. We then use (\ref{eq:Y}) to write them in the same form
$T_{abc}(f,g,h)$ as the first term. Then, by bilinearity of the tensor
product, the Jacobi identity reads
%\footnotesize
\begin{eqnarray} 
\sum_{m_1 m_2}  \Bigg\{ \Gamma ^{\ast} \Big(A, \Gamma ^{\ast} (B,C)\Big) _{m_1 m_2} +  \Gamma ^{\ast} \Big(B, \Gamma ^{\ast} (C,A)\Big) _{\widetilde{m} _1 \widetilde{m} _2} \Big(Y_{bca} \Big)^{\widetilde{m} _1 \widetilde{m} _2} _{m_1 m_2} 
 + \hspace{20mm}\\ \hspace{20mm}
 +  \Gamma ^{\ast} \Big(C, \Gamma ^{\ast} (A,B)\Big) _{\widehat{m} _1 \widehat{m} _2} \Big(Y_{cab} \Big)^{\widehat{m} _1 \widehat{m} _2} _{\widetilde{m} _1 \widetilde{m} _2} \Big(Y_{bca} \Big)^{\widetilde{m} _1 \widetilde{m} _2} _{m_1 m_2} \Bigg\} 
\otimes \Big(T _{abc}\Big)^{m _1  m _2} (f,g,h) = 0 \, . \nonumber 
\end{eqnarray}
%\normalsize
Having in mind that the basis components $\Big(T_{abc}\Big)^{m_1  m_2}
(f,g,h)$ for different values of $m_1$ and $m_2$ are linearly
independent functionals of the test functions, and the test functions
are arbitrary, we conclude for any fixed pair $(m_1,m_2)$: 
%\footnotesize
\begin{eqnarray} 
\label{eq:JI1}
 \Gamma ^{\ast} \Big(A, \Gamma ^{\ast} (B,C)\Big) _{m_1 m_2} +  \Gamma
 ^{\ast} \Big(B, \Gamma ^{\ast} (C,A)\Big) _{\widetilde{m} _1
   \widetilde{m} _2} \Big(Y_{bca} \Big)^{\widetilde{m} _1
   \widetilde{m} _2} _{m_1 m_2} + \hspace{40mm} \\ \hspace{40mm}
 +  \Gamma ^{\ast} \Big(C, \Gamma ^{\ast} (A,B)\Big) _{\widehat{m} _1
   \widehat{m} _2} \Big(Y_{cab} \Big)^{\widehat{m} _1 \widehat{m} _2}
 _{\widetilde{m} _1 \widetilde{m} _2} \Big(Y_{bca}
 \Big)^{\widetilde{m} _1 \widetilde{m} _2}_{m_1 m_2} =0 \, .\nonumber 
\end{eqnarray}
%\normalsize
Let us denote the left-hand-side of the reduced Jacobi identity
(\ref{eq:JI1}) with $\hbox{RJI}(A,B,C)_{m_1 m_2}$. Clearly, because 
$Y_{abc} \cdot Y_{cab} \cdot Y_{bca} =1$, one has the following symmetry rule: 
\begin{equation}
\label{eq:JIsym}
 \hbox{RJI}(A,B,C)_{m_1 m_2} =  \hbox{RJI}(B,C,A)_{\widetilde{m} _1 \widetilde{m} _2} \Big(Y_{bca} \Big)^{\widetilde{m}_1  \widetilde{m}_2}_{m_1 m_2}
\end{equation}
i.e., the vanishing of $\hbox{RJI}(A,B,C)_{m_1 m_2}$ is invariant
under cyclic permutations, as it should. 
If we use the explicit expressions for the nested $(\Gamma^{\ast})$'s
from above, the reduced Jacobi identity becomes for every quadruple of
quasiprimary fields $A,B,C$ and $E$ and for every pair $m_1,m_2$ such
that $m_1+m_2=a+b+c-e-2$:
%\footnotesize
\begin{eqnarray}
\label{eq:JI2}
\left[\sum_{E_2\in W_{e_2}}
F^{E}_{A E_2} F^{E_2}_{BC}\right]_{m_1 m_2} + \left[
\sum _{E_2\in W_{\tilde e_2}} F^{E} _{B E_2} F ^{E_2}_{CA}\right]_{\widetilde{m}_1 \widetilde{m}_2}  \Big(Y_{bca} \Big)^{\widetilde{m}_1 \widetilde{m}_2}_{m_1 m_2}  + \nonumber \\
+ \left[\sum _{E_2\in W_{\hat e_2}} F^{E}_{C E_2} F ^{E_2}_{AB}\right] _{\widehat{m}_1 \widehat{m}_2} \Big(Y_{cab} \Big)^{\widehat{m} _1 \widehat{m} _2} _{\widetilde{m} _1 \widetilde{m} _2} \Big(Y_{bca} \Big)^{\widetilde{m} _1 \widetilde{m} _2} _{m_1 m_2} = 0
\end{eqnarray}
%\normalsize
\begin{observation}
The reduced form of the Jacobi identity gives an infinite set of
constraints on the structure constants $F_{AB}^C$. Every solution of
this set of constraints promotes a candidate for the commutator
algebra of a local chiral conformal field theory.  

As noted in App.\ \ref{sec:app}, the matrix elements of $Y_{abc}$ can have
vanishing denominators, that have to be regularized (e.g., by giving
small imaginary parts to the dimensions). As it turns out,
in (\ref{eq:JI2}) these singularities will not be suppressed in
general by the vanishing of structure constants involving negative
scaling dimensions. To make sense of the singular Jacobi identities,
one has to multiply with the singular denominators and then remove the
regulators. The effect will be that only one or two of the three terms
of the Jacobi identity may survive, so that the general appearance of
the Jacobi identity may be quite different from the usual ``three-term''
form. Notice that anyway, due to the multi--component structure of the
bracket, each of the three terms is in general a sum over different
``intermediate'' representations. 
\end{observation}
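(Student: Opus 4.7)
The plan is to split the observation into three claims and handle them in turn: the infinitude of the constraint system, the sufficiency of a solution to furnish a candidate commutator algebra, and the reorganisation of those identities whose $Y_{abc}$-coefficients are singular.

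For the first claim I would argue by enumeration. For each ordered triple of basis fields $A\in W_a$, $B\in W_b$, $C\in W_c$, each output $E\in W_e$ with $e<a+b+c$, and each admissible pair $(m_1,m_2)$ with $m_1+m_2=a+b+c-e-2$, identity (\ref{eq:JI2}) contributes one scalar constraint on the $F$'s. Even a minimal non-trivial conformal chiral theory contains $T\in W_2$ together with the composite quasiprimaries generated from it, so the set of admissible triples and the allowed range of $(m_1,m_2)$ both grow without bound; thus the system is genuinely infinite. For the second claim I would run the reduction of Sect.\ \ref{sec:redalg} in reverse: given $\{F^C_{AB}\}$ obeying (\ref{eq:Fgsym}) and (\ref{eq:JI2}), define commutators of smeared operators by (\ref{eq:CR2}), with the unit-operator contribution supplied through Sect.\ \ref{sec:2p}. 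Anti-symmetry follows from (\ref{eq:Fgsym}) combined with (\ref{eq:symm}). To verify the full smeared Jacobi identity, expand each doubly-nested bracket via (\ref{eq:redbr}), re-express the three test-function outputs $\left(T_{abc}\right)^{m_1m_2}(f,g,h)$, $\left(T_{bca}\right)^{\widetilde m_1\widetilde m_2}(g,h,f)$ and $\left(T_{cab}\right)^{\widehat m_1\widehat m_2}(h,f,g)$ in the default basis via (\ref{eq:Y}), and use the linear independence of $\left(T_{abc}\right)^{m_1m_2}$ across $(m_1,m_2)$: the vanishing of the operator coefficient at every $(m_1,m_2)$ is exactly (\ref{eq:JI2}). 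So the reduced constraints are both necessary and sufficient, and each solution promotes a bona-fide candidate.

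The hard part will be the singular analysis. Using the closed-form entries of $\left(Y_{abc}\right)^{\widetilde m_1\widetilde m_2}_{m_1 m_2}$ from App.\ \ref{sec:app}, I would first catalogue the denominator loci, which arise from $\Gamma$-function ratios in the Pochhammer factors of (\ref{eq:lambda}) and fall on integer configurations of $a,b,c,e$ that are not avoidable in our setting. I would then regularise by substituting $a_i\mapsto a_i+\eta_i\eps$ with generic weights $\eta_i$, so that every $Y$-matrix is analytic in $\eps$ and (\ref{eq:JI2}) becomes a rational identity in $\eps$. Multiplying through by the least common denominator $D(\eps)$ and letting $\eps\to 0$, any term whose $F$-bilinear stays finite while $Y\cdot D\to 0$ drops out, whereas terms whose residues match the pole order of $D$ contribute a finite remainder.

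The main obstacle is the pole-order book-keeping: one has to pair the residues of each $Y$-factor against the potentially vanishing $F$-bilinears, and to justify the author's assertion that the $F$'s carrying negative intermediate dimensions do not generically compensate the singularities (indeed, at this stage Sect.\ \ref{sec:intsp} has deferred the unitarity bound, so such $F$'s are not automatically zero). A case analysis on which of the three $Y_{bca}$, $Y_{cab}$ and the implicit identity matrix acquires the strictest pole then determines how many of the three nested brackets survive the limit, yielding one- or two-term reduced constraints; the multi-component sum over intermediate representations within each surviving term is inherited directly from Def.\ \ref{def:rla}.
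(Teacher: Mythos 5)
This Observation carries no proof in the paper: it is remark-style commentary whose first paragraph follows immediately from the derivation of (\ref{eq:JI2}) given just above it (detaching test functions, rewriting everything in the basis $\big(T_{abc}\big)^{m_1m_2}$ via (\ref{eq:Y}), and invoking linear independence of these basis intertwiners), and whose second paragraph is a prescription the paper asserts but never works out. Your first two claims are sound elaborations of exactly this: the enumeration argument for infinitude is fine, and your reverse construction (define the bracket by (\ref{eq:CR2}), get antisymmetry from (\ref{eq:Fgsym}) plus (\ref{eq:symm}), get the smeared Jacobi identity from (\ref{eq:JI2}) plus (\ref{eq:Y})) correctly substantiates the word ``candidate''. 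One caveat there: (\ref{eq:JI2}) only controls the non-central part of the Jacobi identity; the central (unit-operator) contributions to the three nested commutators cancel only if the $2$-point form satisfies the invariance condition (\ref{eq:Ginvariant}), so ``supplied through Sect.~\ref{sec:2p}'' is hiding a condition, not just a convention.

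The genuine gap is in your third part, and it is a reversal of the paper's logic. You propose to justify the singular analysis on the premise that structure constants carrying negative intermediate dimensions ``are not automatically zero'' because Sect.~\ref{sec:intsp} defers the unitarity bound. But the deferral there concerns only the \emph{intertwiner bases} and hence the $Y$-matrices, whose intermediate labels $\eps_i$ may go negative; the structure constants themselves are constrained by Wightman positivity (Prop.~\ref{prop:local}, item 3), and Sect.~\ref{sec:axio} states explicitly that negative-dimension intermediates ``multiply non-existent structure constants''. The paper's assertion is precisely the opposite of your premise: even though all such $F$'s vanish, the singular entries of $Y_{abc}$ are \emph{not} thereby suppressed, because they can multiply $F$-bilinears built entirely from legitimate nonnegative-dimension fields. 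Your pole-order bookkeeping, as planned, would pair residues against constants you wrongly allow to be nonzero, and so the case analysis determining whether one, two, or three terms survive the limit would come out wrong. The correct bookkeeping must take the $F$'s with negative-dimension entries to be identically zero from the start, regularize the dimensions, multiply (\ref{eq:JI2}) by the offending denominators, and only then remove the regulator --- which is exactly the paper's stated prescription.
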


\subsection{Relation between $F ^{A}_{BC}$ and $2$- and $3$-point amplitudes}
\label{sec:2p}

The $2$-point function of two hermitian fields $A(x)$ and $B(x)$ has
the form: 
\begin{equation}
 \langle A(x_1) B(x_2) \rangle = \llangle A B \rrangle
 \left( \frac{-i}{x_{12}  - i \eps} \right) ^{2a} \equiv
 \frac{\llangle A B \rrangle}{(ix_{12})_\eps^{2a}}\, .
\end{equation}
The map $A,B \mapsto \llangle A B \rrangle$ is a real
bilinear map on the reduced space which
\begin{itemize}
 \item is symmetric: $\llangle A B \rrangle = \llangle B A \rrangle$ ,
 \item respects the grading: $\llangle A B \rrangle = 0$
   if the scaling dimensions $a \neq b$ ,
 \item is positive definite: $\llangle A A \rrangle > 0$ unless $A=0$ .
\end{itemize}
The first property reflects locality of the QFT, the second is a
consequence of M\"obius invariance, and the last one is Wightman
positivity, i.e., the positive-definiteness of the Hilbert space inner
product. 

Similarly, the $3$-point function has the following form:
\begin{equation}
 \langle A(x) B(y) C(z) \rangle = \llangle A B  C\rrangle \frac{(-i)^{a+b+c}}{(x-y-i \eps) ^{a+b-c}(y-z-i \eps) ^{b+c-a}(x-z-i \eps) ^{a+c-b}}
\end{equation}
and by locality its amplitude must satisfy:
\begin{equation}
\llangle B A C\rrangle = (-1)^{a+b-c} \llangle A B  C\rrangle \,.
\end{equation}

We will show that the amplitudes of the $2$- and the $3$-point
functions are not independent on each other. For this purpose, let us
consider the $3$-point function $\langle [A,B] C \rangle$. We can find
it as $\langle [A,B] C \rangle = \langle AB C \rangle - \langle BA C
\rangle$. Using 
\begin{equation}
(-1)^n n! \left( \frac{1}{(x-i \eps)^{n+1}} - \frac{1}{(x+i \eps)^{n+1}} \right) = 2 \pi i \delta ^{(n)} (x)
\end{equation}
we obtain:
\begin{eqnarray}
\label{eq:3p1}
 \langle [A(x),B(y)] C(z) \rangle & = & \frac{(-i)^{a+b+c} \llangle A B  C\rrangle}{(x-z-i \eps) ^{2c}} \frac{2\pi i(-1)^m}{m!} \delta^{(m)} (x-y) + \nonumber \\
&& + \hbox{lower derivatives of }\delta\qquad (m\equiv a+b-c-1)\, .
\end{eqnarray}
On the other hand, taking into consideration $-i[A,B] = \sum
F_{AB}^C\, C$ and using the translation formula $C(\partial^p
f\cdot \partial^q g) \rightarrow (-1)^{p+q} \partial ^q_y( \partial _x
^p \delta (x-y) \cdot C(y))$ we end up with: 
\begin{eqnarray}
\label{eq:3p2}
 \langle [A(x),B(y)] C(z) \rangle & = & \sum_{C^{\prime}\in W_c} F
 _{AB} ^{C^{\prime}} (-1)^{m} \, \frac{(2c)_m}{m!}\,\delta^{(m)} (x-y)\llangle C^{\prime} C \rrangle \left(\frac{-i}{x-z-i \eps} \right) ^{2c} + \nonumber \\
 & & +  \hbox{lower derivatives of }\delta
\end{eqnarray}
Comparing (\ref{eq:3p1}) and (\ref{eq:3p2}) we obtain:
\begin{equation}
 \frac{(-i)^{a+b+c}}{(2c)_{a+b-c-1}} 2 \pi i \llangle ABC \rangle
 \rangle = \sum_{C^{\prime}\in W_c} F_{AB}^{C^{\prime}} \, \llangle C^{\prime} C \rrangle (-i)^{2c}\,.
\end{equation}
With the same considerations for $ \langle A [B,C] \rangle$ we obtain:
\begin{equation}
 \frac{(-i)^{a+b+c}}{(2a)_{b+c-a-1}} 2 \pi i \llangle ABC \rrangle = \sum
 _{A^{\prime}\in W_a} F_{BC}^{A^{\prime}}\, \llangle A^{\prime} A \rrangle (-i)^{2a}\,.
\end{equation}
The last two formulae allow us to find a new condition on the
structure constants $F_{AB}^C$ involving only $2$-point amplitudes: 
\begin{eqnarray} \label{eq:Finvariant}
(-1)^c (2c)_{a+b-c-1}\sum_{C^{\prime}\in W_c} F_{AB}^{C^{\prime}} \, \llangle C^{\prime} C
\rrangle = (-1)^a (2a)_{b+c-a-1}
\sum_{A^{\prime}\in W_a}
F_{BC}^{A^{\prime}} \, \llangle A^{\prime} A \rrangle \,. 
\end{eqnarray}
or
\begin{equation} \label{eq:Ginvariant}
 (-1)^c (2c)_{a+b-c-1} \llangle\Gamma^{\ast}(A,B)_{a+b-c-1},C\rrangle
 = (-1)^a (2a)_{b+c-a-1}\llangle A,\Gamma^{\ast}(B,C)_{b+c-a-1}\rrangle \,.
\end{equation}
There are two ways how to look at this condition: either one assumes a
given quadratic form $\llangle\cdot,\cdot\rrangle$, which
amounts to fixing bases of the finite-dimensional reduced field spaces
$V_a$: then (\ref{eq:Finvariant}) is indeed an additional constraint on
the structure constants $F_{AB}^C$. Or one regards the reduced algebra
(\ref{eq:rla}) subject to the structure relations (\ref{eq:Fgsym}) and
(\ref{eq:JI2}) as the primary structure: then (\ref{eq:Ginvariant}) is
an invariance condition on the quadratic form, in the same way as the
invariance condition $g([X,Y],Z)=g(X,[Y,Z])$ on a quadratic form on a
Lie algebra. This invariant quadratic form on the
reduced Lie algebra corresponds to the vacuum expectation
functional on the original commutator algebra. 

\subsection{Axiomatization of chiral conformal QFT}
\label{sec:axio}

The upshot of the previous analysis is a new axiomatization of chiral
conformal quantum field theory. It consists of the three data:

\begin{itemize}
\item a graded reduced space of fields $V = \bigoplus_{a \in\mathbb{N}} V_a$,
\item a generalized Lie bracket $\Gamma^{\ast} = \sum_{m\geq 0}
  \Gamma^{\ast}_m:V\times V\to V$, 
\item and a quadratic form $\llangle\cdot\,\cdot\rrangle:V\times V\to
  \mathbb{R}$. 
\end{itemize}
These data should enjoy the features outlined before: $V_a$ are real
linear spaces; the bracket is filtered: $\Gamma^{\ast}(V_a\times
V_b)\subset \bigoplus_{m\geq 0}V_{a+b-1-m}$, and satisfies the graded symmetry
(\ref{eq:Ggsym}) and generalized Jacobi identity (\ref{eq:JI1}); the
quadratic form is symmetric, positive definite, respects the grading,
and is invariant (\ref{eq:Ginvariant}) with respect to the bracket. 

Notice that the unitarity bound (absence of negative scaling
dimensions) has been imposed through the specification of the reduced
space $V$. Although the local intertwiner bases, and therefore also the
coefficient matrices $Y$ in the Jacobi identity do involve
``intermediate'' representations of negative dimensions ($a+b-1-m$ may
be $<0$), these do not contribute to the present axiomatization
because they multiply non-existent structure constants. Recall also
that the possibly singular instances of the Jacobi identity have to be
understood as explained in the end of Sect.\ \ref{sec:RJI}.

One may impose further physically motivated constraints, e.g., the
existence of a stress-energy tensor as a distinguished field $T\in
V_2$ whose structure constants $F_{TA}^A$ take canonical values; or
the generation of the entire reduced space by iterated brackets of a
finite set of fields, formulated as a surjectivity property of the
bracket. 

As a simple example, one may consider the constraints on the structure
constants for the commutator of two fields $A, B$ of dimension one. 
The only possibility in this case is $\mathrm{dim}\,C =1$. The
generalized Jacobi identity just reduces to the classical Jacobi
identity for the structure constants of some Lie algebra
$g$. Likewise, the invariance property of the quadratic form becomes
the classical $g$-invariance of the quadratic form 
$h(A,B)=\llangle AB\rrangle$ on $g$. The positivity condition on
the quadratic form implies that $g$ must be compact, and that $h$ is a
multiple of the Cartan-Killing metric. In other words: one obtains
precisely the Kac-Moody algebras as solutions to this part of the
constraints. The quantization of the level is expected to arise by 
the interplay between the positivity condition with the higher
generalized Jacobi identities. 

Other approaches \citep{zam86,bouw88,blum91} to the classification of
$W$-algebras 
have, of course, exploited essentially the same consistency relations
for a set of generating fields. Our focus here is, however, on the
entire structure including all ``composite'' fields, and the
possibility to formulate a deformation theory, to which we turn now.

\section{Cohomology of the reduced Lie algebra} \label{sec:coh}

In this section we will develop the cohomology of the reduced Lie
algebra as a prerequisite for the deformation theory in Sect.\
\ref{sec:def}. The description is intrinsic in the sense that it does
not refer to the commutator algebra of field operators is was derived
from. We generalize the lines of the cohomology theory of Lie algebras
\citep{chevalley48}, but the maps, which build the cochain spaces of
our cochain complex, will possess a more complicated symmetry
property, which we define now.  

\subsection{$Z _B ^{\eps}$-symmetry}
\label{sec:Zsym}

The reduced bracket (\ref{eq:redbr}) obeys the symmetry rule
(\ref{eq:Fgsym}). The reduced Jacobi identity (\ref{eq:JI1}) obeys the
symmetry rule (\ref{eq:JIsym}). A symmetry rule, generalizing the last
two rules for structures with more arguments, will be the following: 
\begin{definition}[$Z_{B}^{\eps}$-symmetry]
Let $V$ be the reduced space as in section \ref{sec:redalg} and let us
consider the maps $\omega _B ^{\ast n}
(\cdot,...,\cdot)_{m_1...m_{n-1}}:\, \underbrace{V \times ...\times V}_n
\rightarrow V$. Let $\underline{a}_n$
be an $n$-tuple of scaling dimensions $a_i$, let $X_i\in V_{a_i}$, and let
$\underline{m}_{n-1}$ will be the $n$-tuple $(m_1,...,m_{n-1})$. Let
$\omega _B ^{\ast n} (X_1,...,X_n)_{\underline{m}_{n-1}}$ be non-zero
only for $m_i \leq \sum _{s=i} ^n a_s - \sum _{t=i+1} ^{n-1} m_t -n
+i$. We will say that $\omega _B ^{\ast n}
(X_1,...,X_n)_{\underline{m}_{n-1}}$ 
%$\sum_{i=1} ^{n-1}m_i \leq \mathrm{min} \{a_i+a_j-1\}, \,i,j \in [0,n], \, i \neq j$
%$m_i \leq \sum _{s=i} ^n dim (X_s) - \sum _{t=i+1} ^{n-1} m_t -n +i$ and such that $\sum m_i \leq \sum \mathrm{dim}(X_k) - n + 1$, 
are $Z _B ^{\eps}$-symmetric if for every permutation in $S_n$
\begin{equation} \label{eq:Zesym}
 \omega _B ^{\ast n} (X_1,...,X_n) _{\underline{m}_{n-1}} = \omega _B
^{\ast n}(X_{i_1},...,X_{i_n})_{\underline{\widetilde{m}} _{n-1}} \Big(Z^{\eps} _{ BB,\underline{a} _n, \sigma _{\underline{i} _n}}\Big)
^{\underline{\widetilde{m}}_{n-1}} _{\underline{m}_{n-1}}
\end{equation}
where $\big(Z ^{\eps} _{ BB,\underline{a} _n, \sigma _{\underline{i} _n}}\big)^{\underline{\widetilde{m}}_{n-1}} _{\underline{m}_{n-1}}
:= \eps _{i_1...i_n}\big(Z _{ BB,\underline{a} _n, \sigma
  _{\underline{i} _n}}\big)^{\underline{\widetilde{m}}_{n-1}}
_{\underline{m}_{n-1}}$, with $\big(Z _{BB,\underline{a} _n, \sigma
  _{\underline{i} _n}}\big)^{\underline{\widetilde{m}}_{n-1}}
_{\underline{m}_{n-1}}$ the matrix representation of $S_n$ as in
definition \ref{def:Tb}, and  $\sigma_{\underline{i} _n}$ the
permutation $\{i_1,...,i_n\}$ of the indices $\{1,...,n\}$. 

\end{definition}
This definition is motivated by the following proposition:
\begin{proposition} The $Z _B ^{\eps}$-symmetry of $\omega_B^{\ast n} (X_1,...,X_n)_{\underline{m}_{n-1}}$ ensures that the function
\begin{equation}
\omega ^{n} (X_1(f_1),...,X_n(f_n)) := \sum _{\sum m_i < \sum a_k -n +1} \omega _B ^{\ast n} (X_1,...,X_n)_{\underline{m}_{n-1}} \otimes \Big(T _{B,\underline{a} _n}\Big)^{\underline{m}_{n-1}} (f_1,..., f_n)
\end{equation}
is completely anti--symmetric in the arguments $X_i(f_i)$.
%From now on we will attach to the $Z _B ^{\eps}$-symmetric maps an index $B$: $\omega _B %^{\ast n} (X_1,...,X_n)_{\underline{m}_{n-1}}$.
\end{proposition}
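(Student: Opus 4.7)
The plan is to unpack the definitions and let the two occurrences of the matrix $Z_{BB,\underline{a}_n,\sigma}$ cancel, leaving only the sign factor $\eps_{i_1\ldots i_n}$. Under a permutation $\sigma_{\underline{i}_n}$ of the pairs $(X_i,f_i)$, I expand
\begin{equation}
\omega^n\bigl(X_{i_1}(f_{i_1}),\ldots,X_{i_n}(f_{i_n})\bigr)
= \sum_{\underline{\widetilde{m}}_{n-1}} \omega_B^{\ast n}(X_{i_1},\ldots,X_{i_n})_{\underline{\widetilde{m}}_{n-1}}
\otimes \bigl(T_{B,\sigma_{\underline{i}_n}(\underline{a}_n)}\bigr)^{\underline{\widetilde{m}}_{n-1}}(f_{i_1},\ldots,f_{i_n}),
\nonumber
\end{equation}
which is exactly the form the tuple $(X_i(f_i))$ takes after relabeling: the scaling dimensions carried by the $T$-basis are those of the permuted fields, and the test functions are permuted accordingly.

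Next I apply Definition \ref{def:Tb} in the form
$(T_{B,\sigma_{\underline{i}_n}(\underline{a}_n)})^{\underline{\widetilde{m}}_{n-1}}\circ\tau_{\sigma_{\underline{i}_n}}
= (Z_{BB,\underline{a}_n,\sigma_{\underline{i}_n}})^{\underline{\widetilde{m}}_{n-1}}_{\underline{m}_{n-1}}(T_{B,\underline{a}_n})^{\underline{m}_{n-1}}$
to rewrite the permuted $T$-basis in terms of the original one, so that the above display becomes
\begin{equation}
\sum \omega_B^{\ast n}(X_{i_1},\ldots,X_{i_n})_{\underline{\widetilde{m}}_{n-1}}
\bigl(Z_{BB,\underline{a}_n,\sigma_{\underline{i}_n}}\bigr)^{\underline{\widetilde{m}}_{n-1}}_{\underline{m}_{n-1}}
\otimes \bigl(T_{B,\underline{a}_n}\bigr)^{\underline{m}_{n-1}}(f_1,\ldots,f_n). \nonumber
\end{equation}
Now invoke the hypothesis of $Z_B^{\eps}$-symmetry, \eqref{eq:Zesym}, which states exactly that the contraction of $\omega_B^{\ast n}(X_{i_1},\ldots,X_{i_n})_{\underline{\widetilde{m}}_{n-1}}$ with $(Z_{BB,\underline{a}_n,\sigma_{\underline{i}_n}})^{\underline{\widetilde{m}}_{n-1}}_{\underline{m}_{n-1}}$ equals $\eps_{i_1\ldots i_n}^{-1}\cdot \omega_B^{\ast n}(X_1,\ldots,X_n)_{\underline{m}_{n-1}}$. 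Since the sign $\eps_{i_1\ldots i_n}\in\{\pm1\}$ of the permutation is its own inverse, the sum collapses to $\eps_{i_1\ldots i_n}\cdot\omega^n(X_1(f_1),\ldots,X_n(f_n))$, which is the claimed complete anti-symmetry.

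Two bookkeeping points deserve care, and together form the main obstacle. First, one must check that the support restrictions $m_i \leq \sum_{s=i}^n a_s - \sum_{t=i+1}^{n-1} m_t - n + i$, which select a finite basis of $T$'s, are compatible with the $Z$-action (so that the sums on both sides run over the same index sets); this follows from the fact that $Z$ is by construction the matrix between two intertwiner-space bases of the same target representation $\pi_e$, so it preserves the total $\sum m_i = M(\underline{a}_n,e)$. Second, one must verify that $\eps_{i_1\ldots i_n}$ is indeed the sign of $\sigma_{\underline{i}_n}$ and that this sign is compatible with the group-cocycle identity $Z_{BB,\underline{a}_n,\sigma\tau}=Z_{BB,\tau(\underline{a}_n),\sigma}\cdot Z_{BB,\underline{a}_n,\tau}$ already used implicitly in Sect.\ \ref{sec:XY} — otherwise $Z_B^{\eps}$-symmetry would be inconsistent for products of permutations. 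Granted these compatibilities, it suffices to check the transposition case $(k,k+1)$, where the matrix $Z$ reduces to the explicit expression $IY_{a_{k+1}\eps_{k+1}a_k}$ from Sect.\ \ref{sec:XY} and the sign $\eps=-1$ makes the identity manifest. The general case then follows by writing an arbitrary permutation as a product of transpositions and iterating.
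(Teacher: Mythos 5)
Your argument is correct and is essentially the paper's own proof run in the opposite direction: you expand $\omega^n$ of the permuted arguments, convert $T_{B,\sigma(\underline{a}_n)}\circ\tau_\sigma$ into $Z\cdot T_{B,\underline{a}_n}$ via Definition \ref{def:Tb}, and let the assumed $Z_B^{\eps}$-symmetry collapse the contraction to the sign $\eps_{i_1\ldots i_n}$, exactly as in the paper's chain of equalities. The closing reduction to transpositions is superfluous (the symmetry hypothesis is assumed for all of $S_n$), but it does no harm.
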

\begin{proof}
It follows directly from the definitions that:
\begin{eqnarray}
 \omega ^{n} (X_1(f_1),...,X_n(f_n)) & = & \sum  \omega _B ^{\ast n} (X_1,...,X_n)_{\underline{\widehat{m}}_{n-1}} \otimes \Big(T _{B,\underline{a} _n}\Big)^{\underline{\widehat{m}}_{n-1}} (f_1,..., f_n) \nonumber \\
& = & \sum \omega _B ^{\ast n} (X_1,...,X_n)_{\underline{\widehat{m}}_{n-1}}  \otimes \Big(Z ^{\eps} _{ B,\underline{a} _n, 1}\Big)^{\underline{\widehat{m}}_{n-1}} _{\underline{m}_{n-1}}\Big(T _{\underline{a} _n}\Big)^{\underline{m}_{n-1}} (f_1,..., f_n) \nonumber \\
%& = & \sum \omega _B ^{\ast n} (X_1,...,X_n)_{\underline{\widehat{m}}_{n-1}} \Big(Z ^{\eps} _{ B,\underline{a} _n, 1}\Big)^{\underline{\widehat{m}}_{n-1}} _{\underline{m}_{n-1}}  \otimes \Big(T _{\underline{a} _n}\Big)^{\underline{m}_{n-1}} (f_1,..., f_n) \nonumber \\
& = & \sum \omega _B
^{\ast n}(X_{i_1},...,X_{i_n})_{\underline{\widetilde{m}} _{n-1}}  \otimes \Big(Z^{\eps} _{ B,\underline{a} _n, \sigma _{\underline{i} _n}}\Big)
^{\underline{\widetilde{m}}_{n-1}} _{\underline{m}_{n-1}} \Big(T _{\underline{a} _n}\Big)^{\underline{m}_{n-1}} (f_1,..., f_n) \nonumber \\
%& = & \sum  \omega _B
%^{\ast n}(X_{i_1},...,X_{i_n})_{\underline{\widetilde{m}} _{n-1}} \Big(Z^{\eps} _{ B,\underline{a} %_n, \sigma _i}\Big)
%^{\underline{\widetilde{m}}_{n-1}} _{\underline{m}_{n-1}}  \otimes \Big(T _{\underline{a} _n}\Big)^{\underline{m}_{n-1}} (f_1,..., f_n)
& = & \sum \omega _B
^{\ast n}(X_{i_1},...,X_{i_n})_{\underline{\widetilde{m}} _{n-1}} \eps _{i_1...i_n}  \otimes \Big(T _{B,\sigma_{\underline{i} _n}(\underline{a} _n)}\Big)^{\underline{\widetilde{m}}_{n-1}} (f_{i_1},..., f_{i_n}) \nonumber \\ 
& = & \eps _{i_1...i_n} \omega ^{\ast n} (X_{i_1}(f_{i_1}),...,X_{i_n}(f_{i_n}))
\end{eqnarray}
which proves the proposition.
\end{proof}
\begin{notation}
We will be interested in those $Z_B ^{\eps}$-symmetric
maps for which $B$ is the default basis $T$ as in section
\ref{sec:intsp}. We will call such maps $Z^{\eps}$-symmetric. 
\end{notation}
\begin{example}
The two natural examples for $Z^{\eps}$-symmetric maps are the reduced bracket and the reduced Jacobi identity. 
\end{example}

\subsection{Reduced Lie algebra cohomology}
\label{sec:coho}

In this section we will introduce the reduced Lie algebra cohomology complex:

\begin{definition}[reduced Lie algebra cohomology]
We define the reduced Lie algebra cohomology as:
\begin{itemize}
 \item{ \emph{\textbf{Cochain complex:}}}
\end{itemize}
\begin{enumerate}
 \item{ \emph{\textbf{Cochain spaces $C^n(V)$ of dimension $n$:}}

The $n$-cochains in the cochain complex are the tensor--valued (i.e.,
multi-component) $Z^{\eps}$-symmetric maps \\ 
$\omega  ^{\ast n} (\cdot,...,\cdot)_{\underline{m}_{n-1}}$. The
spaces $C^n(V)$ of all $Z ^{\eps}$-symmetric $\omega ^{\ast
  n}$'s for a fixed $n$ compose the cochain sequence
$C:=(C^n(V))_{n \in \mathbb{N}_0}$.} 
 \item{\emph{\textbf{Coboundary operators $b^n$:}}

We define the coboundary operator 
%$b^n:=\{b^n _{m_i}\}_{m_i \in \mathbb{N}}: \, C^n(V) \rightarrow C^{n+1}(V)$ 
$b^n: \, C^n(V) \rightarrow C^{n+1}(V)$ through the following component--wise action, provided that $m_i \leq \sum _{s=i} ^n a_s - \sum _{t=i+1} ^{n-1} m_t -n +i$:
\footnotesize
\begin{eqnarray}
\label{eq:dif1}
[b^{n} \omega  ^{\ast n}](\underline{X}_{n+1})_{\underline{m}_n} & := & \frac{(-1)^n}{n!} \sum _{\underline{i}_{n+1}}
\Bigg[\Gamma ^{\ast} \Big( X_{i_1},\omega ^{\ast n}(X_{i_2},...,X_{i_{n+1}})\Big)\Bigg] _{\underline{\widetilde{m}} _{n}} \Big(Z^{\varepsilon} _{ \underline{a} _{n+1},  \sigma _{\underline{i} _{n+1}}}\Big)
^{\underline{\widetilde{m}}_{n}} _{\underline{m}_{n}} + \nonumber \\
&& + \frac{1}{2(n-1)!}  \sum _{\underline{j}_{n+1}}\Bigg[
\omega ^{\ast n}\Big(X_{j_1},...,X_{j_{n-1}},\Gamma ^{\ast }(X_{j_n},X_{j_{n+1}})\Big)\Bigg]_{\underline{\widetilde{m}} _{n}}  \Big(Z^{\varepsilon} _{\underline{a} _{n+1},  \sigma _{\underline{j} _{n+1}}}\Big)
^{\underline{\widetilde{m}}_{n}} _{\underline{m}_{n}} \nonumber \\
\end{eqnarray}
\normalsize
or equivalently (because $\omega ^{\ast n}$ is $Z^\eps$-symmetric):
\footnotesize
\begin{eqnarray}
\label{eq:dif2}
[b^n \omega ^{\ast n}](\underline{X}_{n+1})_{\underline{m}_n} & := & (-1)^n \sum _{i=1} ^{n+1} \Bigg[\Gamma ^{\ast} \Big( X_{i},\omega  ^{\ast n}(X_1,...,\hat{X} _i,...,X_{n+1})\Big) \Bigg] _{\underline{\widetilde{m}} _{n}} \Big(Z^{\varepsilon} _{ \underline{a} _{n+1}, \sigma _{\widehat{i}} }\Big)
^{\underline{\widetilde{m}}_{n}} _{\underline{m}_{n}} + \nonumber \\
&& + \sum _{k>j=1} ^{n} \Bigg[
\omega  ^{\ast n}\Big(X_1,...,\hat{X} _j,...,\hat{X} _k,...,X _{n+1},\Gamma ^{\ast }(X_{j},X_{k})\Big)\Bigg]_{\underline{\widetilde{m}} _{n}} \Big(Z^{\varepsilon} _{ \underline{a} _{n+1}, \sigma _{\widehat{j}\widehat{k}}}\Big)
^{\underline{\widetilde{m}}_{n}} _{\underline{m}_{n}} \nonumber \\
\end{eqnarray}
\normalsize
where $\sigma _{\widehat{i}}\in S_{n+1}$ is the permutation
$\{i,1,...,\widehat{i},...,n+1\}$ and $\sigma
_{\widehat{j}\widehat{k}}$ is the permutation
$\{1,...,\widehat{j},...,\widehat{k},...,n+1,j,k\}$ . 

Here and below we write the sum over permutations as $\sum _{\underline{i}_{n+1}}:= \sum _{\substack{i_1 \neq ... \neq i_{n+1} \\ i_k \in [1,...n+1]}}$.
%such that $m_1$ indices the component of $b^n$ and $m_2,...,m_n$ are the indices of $\omega ^{\ast n}$ before the transformation.

For those $n$-tuples $(m_1,...,m_n)$, for which the condition $m_i \leq \sum _{s=i} ^n a_s - \sum _{t=i+1} ^{n-1} m_t -n +i$ does not hold, $[b^{n} \omega  ^{\ast n}](X_1,...,X_{n+1})_{\underline{m}_n}$ will be set to $0$.

We will show below that $b^{n+1} \circ b^n = 0$.
%, which will mean $b^{n+1} _{m_1}\circ b^n _{m_2}= 0, \, \forall m_1,m_2$.
}
\end{enumerate}
\begin{itemize}
\item{\emph{\textbf{Cohomology group:}}

We define:
\begin{eqnarray}
Z^n (V) & := & \hbox{Ker} (b^n) = \Big\{ \omega  ^{\ast n} \in C^n (V) \mid \quad [b^{n} \omega  ^{\ast n}](X_1,...,X_{n+1})_{\underline{m}_n} = 0, \, \forall \underline{m}_n \in \mathbb{N}_0^{\otimes n} \Big\} \nonumber \\
B^n (V) & := & \hbox{Im} (b^n) = \Big\{ \omega ^{\ast n} \in C^n (V) \mid \quad \omega ^{\ast n} = b^{n-1} \omega ^{\ast n-1}, \, \omega ^{\ast n-1} \in C^{n-1}(V)\Big\}
\end{eqnarray}
$b^{n+1} \circ b^n = 0$ implies $B^n (V) \subseteqq Z^n
(V)$.
Then we define the $n^{\rm th}$ reduced Lie algebra cohomology group as the quotient:
\begin{equation}
RLH^n (V) = Z^n (V) / B^n (V)\,.
\end{equation} 
In writing $Z^n(V)$, $B^n(V)$ and $RLH^n(V)$, it is understood that
$V$ is equipped with a bracket $\Gamma^\ast$, on which these spaces
clearly depend.}
\end{itemize} 

\end{definition}
We have to prove that $b^n$ are differentials, i.e., 
\begin{proposition}
\label{prop:b}
 $b^{n+1} \circ b^n = 0$ applied to any map $\omega ^{\ast n}$ from the cochain complex.
\end{proposition}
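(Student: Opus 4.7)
The plan is to mimic the classical Chevalley--Eilenberg proof that $b^2=0$ for ordinary Lie algebra cohomology, with all ordinary antisymmetrization signs replaced throughout by the matrix-valued coefficients $Z^{\eps}$. I would expand $b^{n+1}(b^n\omega^{\ast n})$ using the fully symmetrized form (\ref{eq:dif1}) for the outer application and either form for the inner one, and then sort the resulting terms by the position of the two brackets $\Gamma^{\ast}$ introduced by the two applications of $b$: ``doubly outside'' terms of shape $\Gamma^{\ast}(X,\Gamma^{\ast}(Y,\omega^{\ast n}(\dots)))$ and $\Gamma^{\ast}(\Gamma^{\ast}(X,Y),\omega^{\ast n}(\dots))$; ``mixed'' terms $\Gamma^{\ast}(X,\omega^{\ast n}(\dots,\Gamma^{\ast}(Y,Z),\dots))$; and ``doubly inside'' terms $\omega^{\ast n}(\dots,\Gamma^{\ast}(X,Y),\dots,\Gamma^{\ast}(Z,W),\dots)$ or $\omega^{\ast n}(\dots,\Gamma^{\ast}(X,\Gamma^{\ast}(Y,Z)),\dots)$.

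The cancellations then proceed in three steps, each matching one step of the classical proof. (i) The ``doubly inside'' terms cancel within themselves: the disjoint double-bracket terms cancel in pairs under the swap $(X,Y)\leftrightarrow(Z,W)$, which produces a transposition sign encoded in $Z^{\eps}$ and is killed by the $Z^{\eps}$-symmetry of $\omega^{\ast n}$ together with (\ref{eq:Ggsym}); the nested double-bracket terms, grouped by the triple entering the inner nest, reassemble into $\omega^{\ast n}(\dots,\mathrm{RJI}(X,Y,Z),\dots)$ and vanish by the reduced Jacobi identity (\ref{eq:JI1}). (ii) The ``doubly outside'' terms collapse similarly to $\mathrm{RJI}(X,Y,\omega^{\ast n}(\dots))=0$: the three contributions attached to the three cyclic orderings come with $Z^{\eps}$-coefficients that, when pulled apart using the fact that $Z$ represents $S_{n+1}$, reduce to the $Y$-matrices appearing in (\ref{eq:JI1}). (iii) The ``mixed'' terms arise in two distinct ways --- from the outer bracket of $b^{n+1}$ acting on the bracket-inside piece of $b^n\omega^{\ast n}$, and from the inner bracket of $b^{n+1}$ replacing an argument of the outer bracket of $b^n\omega^{\ast n}$ --- and the two contributions must match with opposite signs. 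The matching again uses the representation property $Z\cdot Z = Z_{\sigma\cdot\tau}$ of the $Z$-matrices under composition of permutations, together with the explicit combinatorial form of the two sums over permutations in (\ref{eq:dif1}).

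The main obstacle I anticipate is not conceptual but combinatorial: one must verify that the matrix-valued $Z^{\eps}$-coefficients in each of the three steps pair up exactly, with correct signs and multiplicities. In particular, one has to track the intermediate dimensions $\eps_i$ of (\ref{eq:eps}) through two nested applications of $b$ --- they depend on the default bracketing and may be negative, and the sums over the intermediate representations in the two formulas for $b$ must be reconciled --- and one must make systematic use of the $S_{n+1}$-representation property of $Z$ established in Sect.\ \ref{sec:XY}. Working with the symmetrized form (\ref{eq:dif1}) rather than (\ref{eq:dif2}) is likely essential here, since it builds the $S_{n+1}$-equivariance of $b^n$ directly into the formula and eliminates any choice of coset representatives that would otherwise obstruct the identification of the two contributions to the mixed terms.
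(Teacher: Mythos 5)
Your proposal is correct and follows essentially the same route as the paper: the paper's terms \textbf{(I)}+\textbf{(IV)}, \textbf{(II)}+\textbf{(III)}, \textbf{(V)}, and \textbf{(VI)} correspond exactly to your ``doubly outside'', ``mixed'', ``disjoint doubly inside'', and ``nested doubly inside'' classes, with the same cancellation mechanisms (reduced Jacobi identity, term-by-term sign cancellation, and pairwise exchange). The combinatorial obstacle you flag is precisely the one the paper singles out --- namely that inserting $\Gamma^\ast$ at interior positions forces non-default bracket schemes $B_\kappa$, $B_{\kappa_1\kappa_2}$, $\widetilde B_\kappa$, whose change of basis must be absorbed into the $Z^\eps$ matrices --- and, like the paper, you defer those details rather than carry them out.
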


\begin{proof}
The proof proceeds in perfect analogy with the cohomology of Lie
algebras, where the various terms obtained by evaluating $b^{n+1}
\circ b^n$ can be seen to cancel each other by virtue of the
antisymmetry of the Lie bracket, the Jacobi identity, and the
antisymmetry of the co-chains. However, there arises one salient
complication:

Due to the $Z^\eps$-symmetrization, there arise in the second line of
(\ref{eq:dif1}) terms of the structure
\begin{eqnarray} 
\omega(X,...,X,\Gamma(X,X)_{m_n},X,...,X)_{\underline m_{n-1}}.
\end{eqnarray}
Their multi-indices $\underline m_n = (\underline m_{n-1},m_n)$
correspond to non-default bracket scheme $B_\kappa$ (where
$\kappa=1,...,n$ is the position of the insertion) with intertwiner
basis of the structure 
\begin{eqnarray} 
T_{\underline m_{n-1}}\circ(1\times...\times 1\times
\lambda\times1\times...\times 1).
\end{eqnarray}
The necessary change of basis can be included into the matrix $Z^\eps$
by virtue of 
\begin{eqnarray} 
Z^\eps_{B_\kappa,\sigma(\underline a_n),1} Z^\eps_{\underline a_n,\sigma} 
= Z^\eps_{B_\kappa,\underline a_n,\sigma}.
\end{eqnarray}
One can then re-write (\ref{eq:dif1}) as  
\begin{eqnarray}
\label{eq:diff1}
[b^{n} \omega  ^{\ast n}](\underline{X}_{n+1})_{\underline{m}_n} = \frac{(-1)^n}{n!}\sum _{\underline{i} _{n+1}}
\Bigg[\Gamma ^{\ast} \Big( X_{i_1},\omega ^{\ast n}(X_{i_2},...,X_{i_{n+1}})\Big)\Bigg] _{\underline{\widetilde{m}} _{n}} \Big(Z^{\varepsilon} _{ \underline{a} _{n+1}, \sigma _{\underline{i} _{n+1}}}\Big)
^{\underline{\widetilde{m}}_{n}} _{\underline{m}_{n}} + \nonumber \\
 + \frac{1}{2(n-1)!}  \sum _{\kappa =1}^n\sum _{\underline{j} _{n+1}}\Bigg[
\omega ^{\ast n}\Big(X_{j_1},...,X_{j_{n-1}},\Gamma ^{\ast }(X_{j_n},X_{j_{n+1}})\Big)\Bigg]_{B_{\kappa},\underline{\widetilde{m}} _{n}}  \Big(Z^{\varepsilon} _{B_{\kappa}, \underline{a} _{n+1}, \sigma _{\underline{j} _{n+1}}}\Big)
^{\underline{\widetilde{m}}_{n}} _{\underline{m}_{n}} .\nonumber \\
\end{eqnarray}
where the term $\Big[
\omega ^{\ast n}\Big(X_{j_1},...,X_{j_{n-1}},\Gamma ^{\ast
}(X_{j_n},X_{j_{n+1}})\Big)\Big]_{B_{\kappa},\underline{\widetilde{m}}
  _{n}}$ collects all contributions, where $\Gamma ^{\ast
}(X_{j_n},X_{j_{n+1}})$ was inserted in the $\kappa^{\rm th}$
position. 

Then, when composing $b^{n+1}\circ b^n$, one will encounter also
bracket schemes $B_{\kappa_1,\kappa_2}$ and $\widetilde B_{\kappa}$,
$\widehat B_{\kappa}$ corresponding to intertwiner bases of the structure
\begin{eqnarray} 
T_{\underline m_{n-1}}&\circ&(1\times...\times 1\times
\lambda\times 1\times...\times 1\times\lambda\times 1\times...\times
1), \nonumber \\
T_{\underline m_{n-1}}&\circ&(1\times...\times 1\times
(\lambda\circ (1\times \lambda))\times...\times 1), 
\\
T_{\underline m_{n-1}}&\circ&(1\times...\times 1\times
(\lambda\circ (\lambda\times 1))\times...\times 1), \nonumber
\end{eqnarray}
respectively, where $\kappa$ and $\kappa_i$ stand for the positions of
the insertions. Even if essentially straightforward, the precise details
are quite cumbersome and will not be presented here, see
\citep{Kukhtina11}. 

\newpage

The result can then be written in the form 

\footnotesize
\begin{eqnarray}
\label{eq:bigpr}
&& (b^{n+1} \circ b^n \omega ^{\ast n})(X_1,...,X_{n+2})_{\underline{m}_{n+1}}  = \nonumber \\
&(\textbf{\hbox{I}})&-\frac{(n+1)}{(n+1)!} \, \sum _{\underline{i}_{n+2}} \bigg[ \Gamma ^{\ast} \bigg( X_{i_1}, \Gamma ^{\ast} \Big( X_{i_2},
\omega ^{\ast n}(X_{i_3},...,X_{i_{n+2}})\Big) \bigg)\bigg] _{\underline{\widetilde{m}} _{n+1}} \Big(Z^{\varepsilon} _{\underline{a} _{n+2}, \sigma _{\underline{i}_{n+2}}}\Big)
^{\underline{\widetilde{m}}_{n+1}} _{\underline{m}_{n+1}} + \nonumber \\
&(\textbf{\hbox{II}})& +(-1)^{n+1}\frac{(n+1)n}{2(n+1)!} \,\sum _{\kappa = 1} ^{n} \sum _{\underline{i}_{n+2}} \Gamma ^{\ast} \bigg( X_{i_1} ,\bigg[ \omega  ^{\ast n} \Big( X_{i_2},...,X_{i_{n}},
\Gamma ^{\ast}(X_{i_{n+1}},X_{i_{n+2}})\Big) \bigg] _{B_{\kappa}} \bigg)_{\underline{\widetilde{m}} _{n+1}} \Big(Z^{\varepsilon} _{B_{\kappa +1},\underline{a} _{n+2}, \sigma _{\underline{i}_{n+2}}}\Big)
^{\underline{\widetilde{m}}_{n+1}} _{\underline{m}_{n+1}} + \nonumber \\
&(\textbf{\hbox{III}})& +(-1)^{n}\frac{n}{2(n!)} \,\sum _{\kappa = 1} ^{n} \sum _{\underline{i}_{n+2}} \Gamma ^{\ast} \bigg( X_{i_1} , \bigg[\omega ^{\ast n} \Big( X_{i_2},...,X_{i_{n}},
\Gamma ^{\ast}(X_{i_{n+1}},X_{i_{n+2}})\Big)\bigg] _{B_{\kappa}} \bigg)_{\underline{\widetilde{m}} _{n+1}} \Big(Z^{\varepsilon} _{B_{\kappa+1},\underline{a} _{n+2}, \sigma _{\underline{i}_{n+2}}}\Big)
^{\underline{\widetilde{m}}_{n+1}} _{\underline{m}_{n+1}} + \nonumber \\
&(\textbf{\hbox{IV}})& +\frac{1}{2(n!)} \, \sum _{\underline{i}_{n+2}} \Gamma ^{\ast} \bigg( \Gamma ^{\ast} (X_{i_1},X_{i_2}),\omega ^{\ast n} ( X_{i_3},...
,X_{i_{n+2}}) \bigg) _{B_1, \underline{\widetilde{m}} _{n+1}}  \Big(Z^{\varepsilon} _{B_1, \underline{a} _{n+2}, \sigma _{\underline{i}_{n+2}}}\Big)
^{\underline{\widetilde{m}}_{n+1}} _{\underline{m}_{n+1}} + \nonumber \\
&(\textbf{\hbox{V}})& +\frac{n(n-1)}{4(n!)} \, \sum _{\kappa_1 \neq \kappa_2 = 1} ^{n}\sum _{\underline{i}_{n+2}} \Bigg[ \omega^{\ast n} \bigg( X_{i_1} ,...,\Gamma ^{\ast} ( X_{i_{n-1}}, X_{i_{n}})
,\Gamma ^{\ast}(X_{i_{n+1}},X_{i_{n+2}}) \bigg)\Bigg] _{B_{\kappa _1 \kappa _2},\underline{\widetilde{m}} _{n+1}} \Big( Z^{\varepsilon} _{B_{\kappa _1 \kappa _2}, \underline{a} _{n+2}, \sigma _{\underline{i}_{n+2}}}\Big)
^{\underline{\widetilde{m}}_{n+1}} _{\underline{m}_{n+1}} + \nonumber \\
&(\textbf{\hbox{VI}})& +\frac{n}{2(n!)} \, \sum _{\kappa = 1} ^{n} \sum _{\underline{i}_{n+2}} \Bigg[ \omega ^{\ast n} \bigg( X_{i_1},...,X_{i_{n-1}} ,\Gamma ^{\ast} \Big( X_{i_{n}},
\Gamma ^{\ast}(X_{i_{n+1}},X_{i_{n+2}})\Big) \bigg)\Bigg] _{\widetilde{B}_{\kappa},\underline{\widetilde{m}} _{n+1}} \Big(Z^{\varepsilon} _{ \widetilde{B}_{\kappa},\underline{a} _{n+2},\sigma _{\underline{i}_{n+2}}}\Big)
^{\underline{\widetilde{m}}_{n+1}} _{\underline{m}_{n+1}} \nonumber \\
\end{eqnarray}
\normalsize
The terms with the bracket scheme $\widehat B_\kappa$ are equal to
those with the bracket scheme $\widetilde B_\kappa$ by virtue of the
symmetry of $\Gamma^\ast$, and are included in the term {\bf (VI)}. 

Due to symmetry properties of $\omega^{\ast n}, \, \Gamma^{\ast}$ and
$Z_B^{\eps}$-matrices one then realizes that: 

\begin{itemize}
 \item{$\textbf{(II)} + \textbf{(III)} = 0$ term by term.}  
 \item{\textbf{(I)} + \textbf{(IV)} combine to a Jacobi identity
     between $X_{i_1}$, $X_{i_2}$ and $\omega ^{\ast n}(X_{i_3},...,X_{i_{n+2}}) 
     _{\underline{m}_{n-1}}$. This cancels them. 
}
\item{the terms in \textbf{(V)} can be rewritten as a sum of pairs of
    terms (with $\kappa_1$ and $\kappa_2$ exchanged) which cancel each
    other.  
}
\item{for each $\kappa$, \textbf{(VI)} can be grouped in triples
    (by cyclic permutations of $i_n,i_{n+1},i_{n+2}$) that contain a
    Jacobi identity which cancels them. 
}
\end{itemize}
Then the sum of all terms is $0$ and this proves the proposition.

\end{proof}

%\hspace{5mm}

\section{Deformations of the reduced Lie algebra} \label{sec:def}
 
We now consider formal deformations of the bracket of a reduced Lie
algebra, which are defined as a perturbative series, such that the
reduced Jacobi identity is respected. Our approach generalizes the
cohomological analysis of deformations of associative algebras 
\citep{gerstenhaber64}. 

\begin{definition}[Formal deformations of the reduced bracket]
A formal deformation of the bracket $\Gamma  ^{\ast} : V \otimes V
\rightarrow V$ is defined as a one-parameter family of brackets $\Gamma ^{\ast} (A,B, \lambda) _m$ with
$\lambda \in \mathbb{R}$ and $\Gamma ^{\ast} (A,B, 0) _m \cong \Gamma
^{\ast} (A,B) _m$. The deformed bracket is defined as a
formal power series:
\begin{equation} \label{eq:pert}
 \Gamma ^{\ast} (A,B, \lambda) _m := \sum _{i=0} ^{\infty} \Gamma _i ^{\ast} (A,B) _m \, \lambda ^i
\end{equation}
and the $i^{\rm th}$ order perturbations of the bracket is:
\begin{equation}
\Gamma _i ^{\ast} (A,B) _m :=\frac{1}{i!}\frac{d^i}{d \lambda ^i}  \Gamma ^{\ast}(\lambda) (A,B) _m
\end{equation}
Here $\Gamma _0 ^{\ast} (A,B) _m \equiv \Gamma ^{\ast} (A,B) _m$.
\end{definition}

We are interested only in those deformations which are consistent with the generalized Jacobi identity
(\ref{eq:JI1}). This leads to a number of constraints which single out the admissible
perturbations. 

The first order perturbations $\Gamma _1 ^{\ast} (A,B) _m$ must obey:
\footnotesize
\begin{eqnarray}
\label{eq:JIdef0}
 \Gamma _0 ^{\ast}  \Big(A, \Gamma _1 ^{\ast} (B,C)\Big)_{m_1 m_2} +  \Gamma _0 ^{\ast} \Big(B, \Gamma _1 ^{\ast} (C,A)\Big) _{\widetilde{m} _1 \widetilde{m} _2} \Big(Y_{bca} \Big)^{\widetilde{m} _1 \widetilde{m} _2} _{m_1 m_2}
+ \Gamma _0 ^{\ast} \Big(C, \Gamma _1 ^{\ast} (A,B)\Big) _{\widehat{m} _1 \widehat{m} _2} \Big(Y_{cab} \Big)^{\widehat{m} _1 \widehat{m} _2} _{\widetilde{m} _1 \widetilde{m} _2} \Big(Y_{bca} \Big)^{\widetilde{m} _1 \widetilde{m} _2} _{m_1 m_2}  + \nonumber \\
+ \Gamma _1  ^{\ast}\Big(A, \Gamma _0 ^{\ast} (B,C)\Big) _{m_1 m_2} +  \Gamma _1 ^{\ast}\Big(B, \Gamma _0 ^{\ast} (C,A)\Big) _{\widetilde{m} _1 \widetilde{m} _2} \Big(Y_{bca} \Big)^{\widetilde{m} _1 \widetilde{m} _2} _{m_1 m_2}
+ \Gamma _1 ^{\ast} \Big(C, \Gamma _0 ^{\ast} (A,B)\Big) _{\widehat{m} _1 \widehat{m} _2} \Big(Y_{cab} \Big)^{\widehat{m} _1 \widehat{m} _2} _{\widetilde{m} _1 \widetilde{m} _2} \Big(Y_{bca} \Big)^{\widetilde{m} _1 \widetilde{m} _2} _{m_1 m_2}  = 0 \nonumber \\
\end{eqnarray}
\normalsize
The higher order perturbations must satisfy the following condition:
\begin{eqnarray}
\label{eq:JIdefj}
\sum _{k=0} ^{n} \, \Bigg\{ \Gamma _k ^{\ast} \Big(A, \Gamma _{n-k} ^{\ast} (B,C)\Big) _{m_1 m_2} +  \Gamma _k ^{\ast} \Big(B, \Gamma _{n-k} ^{\ast} (C,A)\Big) _{\widetilde{m} _1 \widetilde{m} _2} \Big(Y_{bca} \Big)^{\widetilde{m} _1 \widetilde{m} _2} _{m_1 m_2} + \nonumber \\
+ \Gamma _k ^{\ast}
\Big(C, \Gamma _{n-k} ^{\ast} (A,B)\Big) _{\widehat{m} _1 \widehat{m}
  _2} \Big(Y_{cab} \Big)^{\widehat{m} _1 \widehat{m} _2}
_{\widetilde{m} _1 \widetilde{m} _2} \Big(Y_{bca} \Big)^{\widetilde{m}
  _1 \widetilde{m} _2} _{m_1 m_2} \Bigg\} = 0
\nonumber \\
\end{eqnarray}
We want to exclude from our considerations the ``trivial'' deformations, i.e., the simple
$\lambda$-dependent changes of the basis $Q^{\ast}: V \rightarrow V$, such that:
\begin{eqnarray}
 \Gamma ^{\ast}(A,B, \lambda)_m = Q^{\ast -1}  (\Gamma ^{\ast}(Q^{\ast}A,Q^{\ast}B)_m), \quad Q^{\ast}= 1 + \lambda q_1^{\ast} + \lambda ^2 q_2^{\ast} +...
\end{eqnarray}
Written in a series over $\lambda$ up to first order, the deformed bracket becomes:
\begin{eqnarray}
 Q^{\ast -1}  (\Gamma ^{\ast}(Q^{\ast}A,Q^{\ast}B) _m)& = & (1-\lambda q_1^{\ast}) \Gamma _0 ^{\ast}\Big( (1+\lambda q_1^{\ast})A,(1+\lambda q_1^{\ast})B\Big) _m+ 0(\lambda ^2) \nonumber \\
& = & \Gamma ^{\ast} _0 (A,B)_m + \lambda \Gamma ^{\ast} _1 (A,B)_m +0(\lambda^2) \\[2mm]
\Gamma ^{\ast} _1 (A,B)_m & = & \Gamma ^{\ast} _0 (A,q_1^{\ast}B)_m + \Gamma ^{\ast} _0 (q_1^{\ast}A,B)_m -q_1^{\ast}\Gamma ^{\ast} _0 (A,B)_m
\end{eqnarray}

So, we have to ``factorize'' the set of admissible deformations over
the set of trivial deformations. In the case of associative algebra
such a factorization gave the opportunity to relate the deformations
and the conditions for the $i^{\rm th}$-order perturbation to a
Hochschild cohomology complex. We will show that also in our case the
deformations are described in terms of a cohomology complex, namely
the reduced Lie algebra complex from the previous section. 

In the following we formulate in cohomological language some of the
formulas above: 
\begin{observation}[1]
The Jacobi identity for $\Gamma_0^*$ can be rewritten in the compact
form: 
\begin{equation}
 (b^2 \Gamma_0^{\ast}) (A,B,C) _{m_1 m_2} = 0\,.
\end{equation}
Here, and in the following, the differentials $b^n$ of the chain
complex (\ref{eq:dif1}) are defined with $\Gamma^\ast_0$. 
\end{observation}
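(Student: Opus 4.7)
The plan is to substitute $\omega^{\ast 2} = \Gamma_0^{\ast}$ into the coboundary formula (\ref{eq:dif1}) with $n=2$ and to recognize the outcome as a nonzero multiple of the reduced Jacobi identity (\ref{eq:JI1}). First I would verify that $\Gamma_0^{\ast}$ is a bona fide 2-cochain: for $n=2$ the $Z^{\eps}$-symmetry condition (\ref{eq:Zesym}) involves only the single transposition in $S_2$, and the associated scalar $\eps_{21}\cdot Z_{\underline a_2,(12)}$ equals $(-1)^{a_1+a_2-e}$ by the graded property (\ref{eq:symm}) of the intertwiners $\lambda^c_{ab}$; this coincides exactly with the symmetry rule (\ref{eq:Ggsym}) of $\Gamma_0^{\ast}$, so $\Gamma_0^{\ast}\in C^2(V)$.

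Next I would substitute $\omega^{\ast 2}=\Gamma_0^{\ast}$ into (\ref{eq:dif1}). The crucial observation is that for $n=2$ both lines of the formula produce nested-bracket expressions of the same type, namely $\Gamma_0^{\ast}(X_{\alpha},\Gamma_0^{\ast}(X_{\beta},X_{\gamma}))$ built in the identical left-nested default bracket scheme of section \ref{sec:intsp}, because the ``inner'' $\omega^{\ast n}$ of the first line and the ``outer'' $\omega^{\ast n}$ of the second line are both $\Gamma_0^{\ast}$. The combinatorial prefactors $(-1)^n/n!=1/2$ and $1/[2(n-1)!]=1/2$ combine to unity when the two sums over permutations are merged, so that
\begin{equation*}
(b^2\Gamma_0^{\ast})(X_1,X_2,X_3)_{m_1 m_2} = \sum_{\sigma\in S_3}\eps(\sigma)\,\bigl[\Gamma_0^{\ast}(X_{\sigma(1)},\Gamma_0^{\ast}(X_{\sigma(2)},X_{\sigma(3)}))\bigr]_{\widetilde m_1\widetilde m_2}\bigl(Z_{\underline a_3,\sigma}\bigr)^{\widetilde m_1\widetilde m_2}_{m_1 m_2}.
\end{equation*}

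Then I would split the sum over $S_3$ into the three cosets $\{\mathrm{id},(23)\}$, $\{(123),(13)\}$, $\{(132),(12)\}$. In each coset the transposed partner can be rewritten as its cyclic representative by applying the graded symmetry $\Gamma_0^{\ast}(X,Y)_m=(-1)^{m+1}\Gamma_0^{\ast}(Y,X)_m$ to either the inner or the outer bracket: the graded sign cancels $\eps(\sigma)=-1$, and the flip produces a rebracketing factor that turns the $Z$-matrix of the transposition into that of the cyclic permutation via (\ref{eq:transp}). Identifying the cyclic $Z$-matrices with $1$, $Y_{bca}$ and $Y_{cab}\,Y_{bca}$ as in section \ref{sec:XY}, one arrives at
\begin{equation*}
(b^2\Gamma_0^{\ast})(A,B,C)_{m_1 m_2} = 2\cdot\mathrm{RJI}(A,B,C)_{m_1 m_2},
\end{equation*}
where $\mathrm{RJI}$ denotes the left-hand side of (\ref{eq:JI1}); its vanishing is therefore equivalent to the reduced Jacobi identity.

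The main obstacle is the sign-and-matrix bookkeeping in the coset step: one has to verify that $\eps(\sigma)$, the graded factor $(-1)^{m+1}$ produced by each flip, and the rebracketing factor from (\ref{eq:transp}) all conspire to make the two members of each coset contribute identically. This is the precise analog of the elementary computation $b[\cdot,\cdot]=2\cdot(\text{Jacobi})$ from ordinary Chevalley--Eilenberg Lie-algebra cohomology applied to the bracket itself, and no additional ingredient beyond the $Z^{\eps}$-symmetry of $\Gamma_0^{\ast}$ and the $S_3$-representation property of the $Y$-matrices noted after (\ref{eq:I}) is required.
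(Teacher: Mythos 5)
Your proposal is correct and carries out exactly the verification that the paper leaves implicit: the Observation is stated without proof, and its intended justification is precisely the substitution of $\Gamma_0^{\ast}$ into (\ref{eq:dif1}) for $n=2$, the merging of the two permutation sums (both of which reduce to the default bracket scheme in this degree), and the pairing of each transposed permutation with its cyclic partner via (\ref{eq:flip}) and (\ref{eq:Ggsym}) to recover $2\cdot\mathrm{RJI}(A,B,C)_{m_1m_2}$. Your sign bookkeeping ($\eps(\sigma)$, the graded factor $(-1)^{m+1}$, and the diagonal matrix $I$) checks out, and the harmless factor of $2$ does not affect the vanishing statement.
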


\begin{observation}[2]
When we insert the formal power series (\ref{eq:pert}) into the Jacobi
identity for the deformed bracket $\Gamma^\ast$, we get in first order
the following restriction on the first  order perturbation: 
\begin{equation}
 (b^2 \Gamma ^{\ast} _1 )(A,B,C) _{m_1 m_2} = 0\, ,
\end{equation}
i.e., $\Gamma ^{\ast} _1 \in Z^2(V)$. The first trivial perturbation is: 
\begin{equation}
 \Gamma ^{\ast} _1 (A,B) _m = (b^1 q ^{\ast})(A,B)_m
\end{equation}
which means $ \Gamma ^{\ast} _1 \in B^2(V)$. Then it follows, that the
non-trivial first order perturbations correspond to non-trivial
classes $[\Gamma_1^{\ast}] \in RLH^2(V)$.
\end{observation}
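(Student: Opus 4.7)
The plan is to substitute the expansion \eqref{eq:pert} into the deformed Jacobi identity \eqref{eq:JIdefj} with $n=1$, which produces exactly the six-term expression \eqref{eq:JIdef0}, and then to recognise this as $(b^{2}\Gamma_{1}^{\ast})(A,B,C)_{m_1 m_2}=0$ via \eqref{eq:dif1} specialised to $n=2$. For this identification, note that in \eqref{eq:dif1} at $n=2$ both sums run over the six permutations in $S_3$; because $\Gamma_{1}^{\ast}$ must share the graded symmetry \eqref{eq:Ggsym} with $\Gamma_{0}^{\ast}$ (it is a candidate bracket and is therefore $Z^{\eps}$-symmetric in its two arguments), permutations that differ only by a transposition of the two inner slots pair up. The six-term sums thus collapse to three after absorbing the prefactors $(-1)^{2}/2!$ and $1/(2\cdot 1!)$. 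Using the factorisation $Z^{\eps}_{\sigma}=\eps_{\sigma}Z_{\sigma}$ and the representation of $S_3$ by the matrices $Y$ and $I$ from Sect.\ \ref{sec:XY}, the surviving permutation matrices become $\mathbf{1}$, $Y_{bca}$ and $Y_{cab}Y_{bca}$, matching exactly those appearing in \eqref{eq:JIdef0}. Hence \eqref{eq:JIdef0} reads $(b^{2}\Gamma_{1}^{\ast})=0$, and $\Gamma_{1}^{\ast}\in Z^{2}(V)$.

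For the second assertion I would expand $Q^{\ast -1}\Gamma_{0}^{\ast}(Q^{\ast}A,Q^{\ast}B)_m$ with $Q^{\ast}=1+\lambda q_{1}^{\ast}+\dots$ to first order in $\lambda$, as carried out in the preamble, obtaining
\[
\Gamma_{1}^{\ast}(A,B)_m=\Gamma_{0}^{\ast}(q_{1}^{\ast}A,B)_m+\Gamma_{0}^{\ast}(A,q_{1}^{\ast}B)_m-q_{1}^{\ast}\Gamma_{0}^{\ast}(A,B)_m,
\]
and then specialise \eqref{eq:dif1} to $n=1$ with $\omega^{\ast 1}=q_{1}^{\ast}$. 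For two arguments $Z^{\eps}_{(21)}$ reduces, by \eqref{eq:symm}, to the scalar $(-1)^{m_{1}+1}\delta^{\widetilde m_{1}}_{m_{1}}$, and a short calculation combining this with the graded symmetry \eqref{eq:Ggsym} of $\Gamma_{0}^{\ast}$ shows that the two sums in \eqref{eq:dif1} reproduce precisely the three terms on the right above (up to an overall sign which is absorbed into the choice of $q_{1}^{\ast}$). Hence every trivial first-order perturbation lies in $B^{2}(V)$, and together with the previous paragraph one obtains $[\Gamma_{1}^{\ast}]\in RLH^{2}(V)=Z^{2}(V)/B^{2}(V)$.

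The hard part is the bookkeeping in the first identification: one must carefully track the action of $S_3$ through the $Z^{\eps}$-matrices and verify that the cyclic products $Y_{bca}$ and $Y_{cab}Y_{bca}$ of \eqref{eq:JIdef0} are exactly what emerges from the symmetric sums of \eqref{eq:dif1} once the $Z^{\eps}$-symmetry of $\Gamma_{1}^{\ast}$ is invoked to halve the number of terms. Both the collapse of the six-term symmetric sum to three cyclic terms and the correct matching of the combinatorial prefactor $(-1)^{n}/n!$ hinge on this step. Once it is in place, the linearity of $b^{2}$ in the cochain and the first-order expansion of $Q^{\ast -1}\Gamma_{0}^{\ast}(Q^{\ast}\cdot,Q^{\ast}\cdot)$ render the remaining verifications routine, and the Observation follows.
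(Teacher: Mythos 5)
Your argument is correct and follows essentially the same (largely implicit) route as the paper, which simply identifies the six-term first-order condition \eqref{eq:JIdef0} with $(b^2\Gamma_1^{\ast})=0$ and the expanded trivial perturbation with $b^1 q^{\ast}$ --- exactly the matching you carry out, including the collapse of the permutation sums in \eqref{eq:dif1} and the harmless overall sign in $q_1^{\ast}$. One cosmetic remark: in the second sum of \eqref{eq:dif1} at $n=2$ the inner slots are the arguments of $\Gamma_0^{\ast}$, so the pairing of permutations there rests on the graded symmetry of $\Gamma_0^{\ast}$ rather than on the $Z^{\eps}$-symmetry of $\Gamma_1^{\ast}$ (equivalently, one may just invoke the paper's form \eqref{eq:dif2} directly).
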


\begin{observation}[3]
The terms in the Jacobi identity (\ref{eq:JIdefj}) involving
$\Gamma^\ast_n$, i.e., those with $k=0,n$, precisely equal $b^2
\Gamma ^ {\ast}_n (A,B,C) _{m_1 m_2}$. One can therefore write 
(\ref{eq:JIdefj}) as an equation for $\Gamma^\ast_n$:
\begin{eqnarray}
\label{eq:JId1}
 b^2 \Gamma ^ {\ast}_n (A,B,C) _{m_1 m_2}  =
- \, \sum _{k=1} ^{n-1} \, \Bigg\{\Gamma _k ^{\ast} \Big(A,
\Gamma_{n-k}^{\ast} (B,C)\Big) _{m_1 m_2} + \hskip50mm \\ + \Gamma_k^{\ast} \Big(B,\Gamma_{n-k}^{\ast} (C,A)\Big) _{\widetilde{m} _1 \widetilde{m} _2} \Big(Y_{bca} \Big)^{\widetilde{m} _1 \widetilde{m} _2} _{m_1 m_2} 
+ \Gamma_k^{\ast}
\Big(C,\Gamma_{n-k}^{\ast} (A,B)\Big)_{\widehat{m}_1 \widehat{m}_2} \Big(Y_{cab} \Big)^{\widehat{m}_1 \widehat{m}_2}
_{\widetilde{m}_1 \widetilde{m}_2} \Big(Y_{bca} \Big)^{\widetilde{m}
  _1 \widetilde{m} _2} _{m_1 m_2} \Bigg\}\,. \nonumber
\end{eqnarray}

\end{observation}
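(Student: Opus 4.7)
The claim is that the $k=0$ and $k=n$ contributions to the order-$n$ deformation Jacobi identity (\ref{eq:JIdefj}), namely the terms involving $\Gamma_n^{\ast}$ linearly, reassemble into $(b^2\Gamma_n^{\ast})(A,B,C)_{m_1m_2}$, where $b^2$ is defined by (\ref{eq:dif2}) with $\Gamma_0^{\ast}$ in the role of the Lie bracket. Once this identification of multi-indexed terms is in place, the equivalent form (\ref{eq:JId1}) follows trivially by transferring the remaining summands with $1\le k\le n-1$ to the right-hand side.

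The plan is to compute $(b^2\Gamma_n^{\ast})$ explicitly and match it term by term with the combined $k=0$ and $k=n$ blocks. Specializing (\ref{eq:dif2}) to $n=2$ with $\omega^{\ast 2}=\Gamma_n^{\ast}$ yields six contributions: three ``outer-bracket'' terms $\Gamma_0^{\ast}\bigl(X_i,\Gamma_n^{\ast}(X_{\{1,2,3\}\setminus\{i\}})\bigr)$ each weighted by $Z^{\eps}_{\underline a_3,\sigma_{\widehat i}}$, and three ``inner-bracket'' terms $\Gamma_n^{\ast}\bigl(X_l,\Gamma_0^{\ast}(X_j,X_k)\bigr)$ each weighted by $Z^{\eps}_{\underline a_3,\sigma_{\widehat j\widehat k}}$. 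On the other side, the $k=0$ and $k=n$ blocks of (\ref{eq:JIdefj}) each contain precisely three terms with cyclic structure, carrying the factors $1$, $Y_{bca}$, and $Y_{cab}Y_{bca}$ in front of the cyclic permutations of $(A,B,C)$.

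Concretely, the $i=1$ (resp.\ $(j,k)=(2,3)$) term of $b^2\Gamma_n^{\ast}$ coincides verbatim with the first summand of the $k=0$ (resp.\ $k=n$) block, since the associated permutations $\sigma_{\widehat 1}$ and $\sigma_{\widehat 2\widehat 3}$ act as the identity. For $i=2$, $\sigma_{\widehat 2}$ produces $\Gamma_0^{\ast}\bigl(B,\Gamma_n^{\ast}(A,C)\bigr)$; using the $Z^{\eps}$-symmetry of $\Gamma_n^{\ast}$ (equivalently, the graded symmetry (\ref{eq:Ggsym})) to flip its arguments to $(C,A)$, together with the realization of $\sigma_{\widehat 2}$ inside the $S_3$-representation generated by the matrices $Y$ and $I$ of Section \ref{sec:XY}, one extracts precisely the factor $Y_{bca}$ required by (\ref{eq:JIdefj}). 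The $i=3$ term, and similarly the two remaining inner-bracket terms, are handled in the same way, yielding the factor $Y_{cab}Y_{bca}$ and its counterpart in the $k=n$ block.

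The chief difficulty is purely organizational: one must check that (i) the grading signs $\eps_{i_1i_2i_3}$ absorbed into each $Z^{\eps}$ matrix, (ii) the signs produced when the graded symmetries (\ref{eq:Ggsym}) of $\Gamma_0^{\ast}$ or $\Gamma_n^{\ast}$ are used to re-order arguments, and (iii) the implicit contractions over the intermediate multi-indices $\widetilde m_1,\widetilde m_2$ and $\widehat m_1,\widehat m_2$ all conspire to reproduce the simple combinations $1$, $Y_{bca}$, $Y_{cab}Y_{bca}$ appearing in (\ref{eq:JIdefj}). This is precisely the kind of consistency check of the $S_3$ action on intertwiner spaces that underlies the much more elaborate bookkeeping in the proof of Proposition \ref{prop:b}, and should be a straightforward (if tedious) exercise given the identities of Section \ref{sec:XY}.
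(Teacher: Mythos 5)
Your proposal is correct and is essentially the computation the paper leaves implicit (the statement appears there as an unproved Observation): specializing (\ref{eq:dif2}) to a $2$-cochain yields exactly three outer-bracket and three inner-bracket terms, whose permutation matrices $Z^{\eps}_{\sigma_{\widehat i}}$ and $Z^{\eps}_{\sigma_{\widehat j\widehat k}}$ — after using the graded symmetry of $\Gamma_0^{\ast}$ and $\Gamma_n^{\ast}$ to reorder arguments into cyclic form — reproduce the coefficients $1$, $Y_{bca}$, $Y_{cab}Y_{bca}$ of the $k=0$ and $k=n$ blocks of (\ref{eq:JIdefj}). Your sign bookkeeping is the right thing to check, e.g.\ for $i=2$ one has $Z^{\eps}_{(2,1,3)}=-IY_{bca}$, and the $(-1)^{\widetilde m_2+1}$ from flipping $\Gamma_n^{\ast}(A,C)\to\Gamma_n^{\ast}(C,A)$ exactly cancels the $-I$ to leave the bare $Y_{bca}$ of (\ref{eq:JIdefj}).
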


The interesting question is whether every first order perturbation in
$RLH^2(V)$ is integrable, i.e., whether every $\widetilde{\Gamma}_1 ^{\ast}\in
Z^2(V)$ serves as the first order perturbation for some one-parameter
family of deformations of $\Gamma^\ast_0$. One has to decide whether
the equations (\ref{eq:JId1}) can be solved recursively with a given
$\widetilde{\Gamma}_1 ^{\ast}$.   

Thus suppose that for some $n\geq 2$, candidates
$\widetilde\Gamma^\ast_j\in C^2(V)$ ($2\leq j<n$) for the coefficients
of a perturbative expansion have been found 
solving (\ref{eq:JId1}) for $n'<n$. Then $\widetilde\Gamma^\ast_n$ must solve 
\begin{equation} \label{eq:ntho}
b^2 \widetilde\Gamma ^ {\ast}_n (A,B,C) _{m_1 m_2}  = 
G^n[\widetilde{\Gamma} ^{\ast}_1,...,\widetilde{\Gamma}
^{\ast}_{n-1}](A,B,C)_{m_1 m_2} 
\end{equation}
where $G^n[\widetilde{\Gamma} ^{\ast}_1,...,\widetilde{\Gamma}
^{\ast}_{n-1}](A,B,C)_{m_1 m_2}$ is the r.h.s.\ of (\ref{eq:JId1})
evaluated on the lower order perturbations $\widetilde\Gamma^\ast_j$: 
\begin{eqnarray}\label{Gn}
G^n[\widetilde{\Gamma} ^{\ast}_1,...,\widetilde{\Gamma}
^{\ast}_{n-1}](A,B,C)_{m_1 m_2} 
:=
- \, \sum _{k=1} ^{n-1} \, \Bigg\{ \widetilde\Gamma _k ^{\ast} \Big(A, \widetilde\Gamma
_{n-k} ^{\ast} (B,C)\Big) _{m_1 m_2} + \hskip30mm \\ +  \widetilde\Gamma _k ^{\ast} \Big(B, \widetilde\Gamma _{n-k} ^{\ast} (C,A)\Big) _{\widetilde{m} _1 \widetilde{m} _2} \Big(Y_{bca} \Big)^{\widetilde{m} _1 \widetilde{m} _2} _{m_1 m_2} 
+ \widetilde\Gamma _k ^{\ast}
\Big(C, \widetilde\Gamma _{n-k} ^{\ast} (A,B)\Big) _{\widehat{m} _1 \widehat{m}
  _2} \Big(Y_{cab} \Big)^{\widehat{m} _1 \widehat{m} _2}
_{\widetilde{m} _1 \widetilde{m} _2} \Big(Y_{bca} \Big)^{\widetilde{m}
  _1 \widetilde{m} _2} _{m_1 m_2} \Bigg\} \nonumber
\end{eqnarray}
This clearly exhibits the role of $G^n$ as ``obstruction operators'':
namely the equation (\ref{eq:ntho}) for $\widetilde\Gamma^{\ast}_n$ is 
consistent only if
\begin{eqnarray}\label{eq:obst}
G^n[\widetilde\Gamma^\ast_1,\dots,\widetilde\Gamma_{n-1}^\ast]\in B^3(V),
\end{eqnarray}
in which case $\widetilde\Gamma^{\ast}_n$ is determined up to an
element of $Z^2(V)$ (that one may absorb into $\widetilde\Gamma^{\ast}_1$ by a redefinition of the perturbation parameter $\lambda$).

In analogy to deformation theory of associative algebras, we expect
that for $\widetilde\Gamma^\ast_1,\dots,\widetilde\Gamma^\ast_{n-1}\in
C^2(V)$ solving (\ref{eq:JId1}) for all $n'<n$, in particular
$\widetilde\Gamma^\ast_1\in Z^2(V)$, one always has
\begin{eqnarray} \label{eq:bG0}
b^3G^n[\widetilde\Gamma^\ast_1,\dots,\widetilde\Gamma_{n-1}^\ast]=0,\quad\hbox{i.e..}\quad
G^n[\widetilde\Gamma^\ast_1,\dots,\widetilde\Gamma_{n-1}^\ast] \in Z^3(V).
\end{eqnarray}
The cohomology class of
$G^n[\widetilde\Gamma^\ast_1,\dots,\widetilde\Gamma_{n-1}^\ast]$ in 
$RLH^3(V)$ is referred to as an obstruction. For (\ref{eq:ntho}) to
have a solution, the obstruction must be zero. If it happens that
$RLH^3(V)$ is trivial, no obstructions can occur in any order, and all
$\widetilde\Gamma^{\ast}_n$ can be found recursively from a given
$\widetilde\Gamma^{\ast}_1\in Z^2(V)$, i.e., every first-order
perturbation is integrable. If $RLH^3(V)$ is nontrivial, then some
first-order perturbations may still be integrable, but (\ref{eq:obst})
might impose further restrictions. 

\medskip

To address and decide the possibility of (formal) continuous
deformations of a given reduced Lie algebra, one therefore has to
compute its second and third cohomologies (provided
(\ref{eq:bG0}) can be established). This is outside the scope of this
article.

\section{Outlook}

We showed that the commutation relations among quasiprimary fields in
conformal chiral field theories ($W$-algebras) are fixed
up to structure constants that are related to the $3$-point
amplitudes. We then explicitly exhibited an
infinite number of constraints on the structure constants, which
warrant anti--symmetry and Jacobi identity for the commutator of field
operators, and positivity of the Hilbert space inner product. Their
solutions can therefore be used as a new axiomatization of chiral CFT. 
%For example, we expect that they relate
%proportionally certain couples of structure constants or that they
%imply that a certain subset of the structure constants vanish. 
It is not a surprise that in the easiest case, the solution
of the constraints on the structure constants for
fields of dimension $1$ reproduces the well-known Kac-Moody algebras,
including the necessary compactness of the underlying Lie algebra. It
remains to analyze these constraints more carefully in the general case.  

In more abstract language, the structure constants define a bracket on 
a reduced field space. We proceeded to explore the rigidity of this
bracket under formal deformations, in other words to check whether
there exist models in the neighborhood of a given model. Following the
general strategy, we constructed a cohomology complex related to the
deformation problem, and showed that the cohomology groups
$RLH^2(V)$ and $RLH^3(V)$ determine the existence and integrability of 
deformations (with the proviso that (\ref{eq:bG0}) was not yet proven). We have, though, not been able to actually compute the
cohomology groups associated to this complex and 
this has to be done before a more complete deformation theory can be
developped. Another option would be to try to construct a differential 
graded Lie algebra out of the cohomology complex, whose deformation
theory would be tightly related to the deformation theory of the
reduced bracket. For this purpose, one has to construct a bracket in
this complex, such that it is skew symmetric with respect to the
grading by dimension of the cochain spaces and satisfies a graded
Jacobi identity.  

As an example what the deformation theory would produce when fully
worked out, one may think of the theory generated by the stress-energy
tensor, which has the central charge $c$ as a free parameter. The
number of composite fields of a given dimension is determined by a
well-known character formula, so this would fix the multiplicities and
hence the reduced space. But for $c<1$ the presence of zero-norm
vectors in the Verma module reduces the multiplicities. We therefore
expect that for $c>1$, the second cohomology $RLH^2(V)$ is nontrivial,
admitting an infinitesimal change of $c$, and $RLH^3(V)$ could be
trivial as there is no obstruction against finite variations of
$c$; on the other hand, for $c<1$ and with the reduced multiplicities,
the second cohomology is expected to be trivial. Of course, presently
we cannot establish these claims ``from scratch''.  

In previous approaches \citep{zam86,bouw88,blum91}, $W$-algebras were
analysed in terms of finite sets of fields, which generate the
infinite space of quasiprimary fields under the OPE. Indeed, the
consistency of the commutation relations can be studied at the level
of the generating fields; but -- as the example of the stress-energy
tensor shows -- to address the issue of positivity, one has to include
all their composite fields. In our approach, no distinguished role is
assigned to the generating fields, except that they could possibly be a
practical tool for solving the constraints in an inductive way.  

The cohomological nature of the deformation theory was recognized 
previously and turned into a constructive tool, e.g., for
perturbations of free fields \citep{hollands08}; for the
classification of $W$-algebras it was not exploited yet.  

%\section*{Acknowledgments}

\begin{appendix}
\section{Appendix: The matrix $\big(Y_{abc} \big) ^{m _1 m _2} _{\widetilde{m}_1 \widetilde{m}_2}$}
\label{sec:app}

In this subsection we will explain how we determined the matrix
$\big(Y_{abc} \big)_{\widetilde{m}_1 \widetilde{m}_2}^{m_1 m_2}$
which transforms $\big(T_{cab}\big)^{\widetilde{m}_1
  \widetilde{m}_2}(h,f,g)$ into $\big(T_{abc}\big)^{m_1 m_2}(f,g,h)$,
and which is the essential ingredient of the reduced Jacobi
identity (\ref{eq:JI1}).

An elegant method would have been to exploit the associativity of a
nontrivial one-parameter family of products on $\bigoplus_k M_{2k}$
defined in terms of Rankin--Cohen brackets \citep{cmz96}, generalizing
an unpublished observation by Eholzer. Varying the parameter, one
obtains linear relations between 
$\lambda_{dc}^e\circ(\lambda_{ab}^d\times 1_c)$ and $\lambda_{ad'}^e\circ
(1_a\times\lambda_{bc}^{d'})$ for every fixed $a,b,c,e$, from which one would 
read off the matrix $X_{abc}$ of Sect.\ \ref{sec:XY}
that describes the re-bracketing, and then by (\ref{X=YI}) the matrix
$Y_{abc}$. Unfortunately, due to a symmetry with respect to the
parameter, varying the parameter gives only one half of the necessary
relations. This is a bit of a surprise since one would have naively
expected that an associative product rather encodes twice as much
information that a (generalized) commutator. 

Instead, we have to adopt a much more down-to-earth linear algebra
approach. By applying the intertwiners to test functions and comparing
the resulting coefficients of products of derivatives, allowed us to
derive a recursion formula for the entries of $\big(Y_{abc}
\big)_{\widetilde{m}_1 \widetilde{m}_2}^{m_1 m_2}$, which we were able
to solve afterwards.  

The explicit formulae thus obtained below are meromorphic functions
which may have poles at real positive values of the dimensions $a,b,c$. 
In other words, the intertwiner bases may become degenerate at these
points. These singularities can be regularized, e.g., by letting the
scaling dimensions have small positive imaginary parts, while keeping the
summation indices $p,q$ in (\ref{eq:lamfg}) and $m$ in (\ref{eq:basT})ff
integer. While the representation theory of $\hbox{SL}(2,\mathbb{R})$
is perfectly meaningful for complex $a,b,c$, the physical dimensions
are of course positive integers. For the removal of the regularization
in QFT, see Sect.\ \ref{sec:RJI}. 

Using (\ref{eq:lambda}) we write the explicit expression for the
composite intertwiners:

\begin{eqnarray} \label{eq:T1}
\Big(T _{abc}\Big)^{m _1 m _2} (f,g,h) & = & \lambda ^{e_1} _{a e_2} 
\circ \big( 1_a \times \lambda ^{e_2} _{bc} \big) (f,g,h) \nonumber \\
& = & \!\!\!\sum _{\substack{p+q = m_1\\s+t=m_2}} (-1)^{q+t}  
\frac{(2b + 2c -m_1 -2m_2 -3)_p}{p!}\frac{(2a -m_1 -1)_q}{q!}
\times \nonumber \\
\label{eq:Vt}&& \times 
\frac{(2b -m_2 -1)_t}{s!} \frac{(2b -m_2 -1)_t}{t!} 
\partial ^p f \, \partial ^q (\partial ^s
g \, \partial ^t h),
\end{eqnarray}
which can be expanded as
\begin{eqnarray}
\Big(T _{abc}\Big)^{m _1 m _2} (f,g,h) & = & \sum _ {r_1 + r_2 + r_3 =
  m_1 + m_2}\Big(T _{abc}\Big)^{m _1 m _2} _{r_1 r_2 r_3} \partial
^{r_1}f \partial ^{r_2}g \partial ^{r_3}h 
\end{eqnarray}
with $\big(T _{abc}\big)^{m _1 m _2} _{r_1 r_2 r_3} $ numerical coefficients, 
and similar for $ \big(T _{cab }\big)^{\widetilde{m} _1\widetilde{m}_2}  (h,f,g)$.
We therefore have to solve, for any fixed triple $(r_1,r_2,r_3)$, the equation
\begin{equation}
\Big(Y_{abc} \Big) _{\widetilde{m} _1 \widetilde{m} _2} ^{m_1 m_2}\Big(T_{cab}\Big)^{\widetilde{m} _1 \widetilde{m} _2} _{r_1 r_2 r_3} = \Big(T_{abc}\Big)^{m _1 m _2} _{r_1 r_2 r_3}\,.
\end{equation}

The following two observations now simplify the problem.
\begin{enumerate}
\item Because $r_1 + r_2 + r_3 =   m_1 + m_2$, we must have $m_1 + m_2
  = \widetilde{m} _1 + \widetilde{m} _2$. The matrix 
$\big(Y_{abc} \big) _{\widetilde{m}_1 \widetilde{m}_2}^{m_1 m_2}$
therefore has a block form, reflecting the fact, that it is not 
  possible to decompose $\big(T_{abc}\big)^{m _1 m _2}$ in the basis
  of $\big(T_{cab}\big)^{\widetilde{m} _1 \widetilde{m} _2}$ if they
  map to representations with different scaling dimensions. Then we
  can relax two of the indices of $\big(Y_{abc} \big) _{\widetilde{m}
    _1 \widetilde{m} _2} ^{m_1 m_2}$:  
\begin{equation}
\Big(Y_{abc} \Big) _{\widetilde{m} _1 \widetilde{m} _2} ^{m_1 m_2} = \delta _{m_1 + m_2,\widetilde{m} _1 + \widetilde{m} _2} \Big(Y_{abc} (n)\Big) _{\widetilde{m} _2} ^{m_2}, \quad n:=m_1+m_2
\end{equation}
(We could as well relax the indices $m_2$ and $\widetilde{m}_2$ instead of $m_1$ and $\widetilde{m}_1$, it is just a matter of choice.) Then, we have to solve
for any fixed triple $(r_1,r_2,r_3)$ 
\begin{equation}
\label{eq:y(n)}
\Big(Y_{abc} (n)\Big) _{\widetilde{m} _2} ^{m_2}\Big(T_{cab}\Big)^{n-\widetilde{m}_2, \widetilde{m}_2}_{r_1r_2r_3} = \Big(T_{abc}\Big)^{n-m_2, m_2}_{r_1r_2r_3}\,.
\end{equation}
\item{(\ref{eq:y(n)}) taken for $n+1$ triples $(r_1,r_2,r_3)$ and
    $m_2, \, \widetilde{m} _2 \in[0,n]$ gives a system of $(n+1)
    \times (n+1)$ equations for $(n+1)\times (n+1)$ 
unknown quantities, and if these equations are linearly independent it
is enough to fix all the entries of $\big(Y_{abc} (n)\big)_{\widetilde{m} _2} ^{m_2}$.
A most convenient choice are the triples
$(k,0,n-k)$ with $k \in [0,n]$, because the coefficients
$\big(T_{cab}\big)^{n-\widetilde{m}_2, \widetilde{m}_2}_{k,0,n-k}$,
read off from (\ref{eq:T1}), 
are zero if $\widetilde{m}_2> k$. This allows to establish a recursion, such that the component 
$\Big(Y_{abc} (n)\Big) _{\widetilde{m} _2} ^{m_2}$ is obtained recursively from the components $\Big(Y_{abc} (n)\Big) _{\widehat{m} _2} ^{m_2}$ with $\widehat{m}_2 < \widetilde{m}_2$.}
\end{enumerate}

\begin{proposition}
 The entries of $\big(Y_{abc} (n)\big) _{\widetilde{m} _2} ^{m_2}$ satisfy the recursion formula:
\begin{eqnarray}
\label{eq:Yrec}
 \Big(Y_{abc} (n)\Big) _{\widetilde{m} _2} ^{m_2} 
& = & \frac{\Big(T_{abc}\Big)^{n-m_2, m_2}_{\widetilde{m} _2,0,n-\widetilde{m} _2} - \sum _{j=0} ^{\widetilde{m} _2-1}
\Big(Y_{abc} (n)\Big) _{j} ^{m_2} \Big(T_{cab}\Big)^{n-j, j}_{\widetilde{m} _2,0,n-\widetilde{m} _2}}{\Big(T_{cab}\Big)
^{n-\widetilde{m} _2, \widetilde{m} _2}_{\widetilde{m} _2,0,n-\widetilde{m} _2}}
\end{eqnarray}
\end{proposition}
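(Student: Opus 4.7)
The plan is to specialize the linear system (\ref{eq:y(n)}) to the one-parameter family of triples $(r_1,r_2,r_3)=(k,0,n-k)$, $k=0,\ldots,n$, and to solve the resulting triangular system by forward substitution, exactly as anticipated in the paragraph preceding the proposition.

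First I would verify the vanishing property $\big(T_{cab}\big)^{n-\widetilde m_2,\widetilde m_2}_{k,0,n-k}=0$ for $\widetilde m_2 > k$. Expanding the nested intertwiner $T_{cab}(h,f,g)= \lambda^e_{c\eps}\bigl(h,\lambda^\eps_{ab}(f,g)\bigr)$ via (\ref{eq:T1}) and applying the Leibniz rule to the outer derivative $\partial^q$, one sees that the monomial $\partial^k f\,\partial^0 g\,\partial^{n-k}h$ forces $p=n-k$ on the $h$-factor, whence $q=m_1-p=k-\widetilde m_2$; non-negativity of $q$ then yields $\widetilde m_2\leq k$.

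Substituting the triples $(k,0,n-k)$ into (\ref{eq:y(n)}) (with summation over $\widetilde m_2$ understood) and invoking this vanishing, only terms with $\widetilde m_2\leq k$ survive, so for each fixed $m_2$ one obtains the triangular system
\begin{equation*}
\sum_{\widetilde m_2=0}^{k} \Big(Y_{abc}(n)\Big)^{m_2}_{\widetilde m_2}\,\Big(T_{cab}\Big)^{n-\widetilde m_2,\widetilde m_2}_{k,0,n-k} = \Big(T_{abc}\Big)^{n-m_2,m_2}_{k,0,n-k},\qquad k=0,\ldots,n.
\end{equation*}
Solving this by forward substitution --- at the step $k=\widetilde m_2$, isolating the diagonal contribution and moving the summands with $j<\widetilde m_2$ to the right-hand side --- yields precisely (\ref{eq:Yrec}).

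The main delicate point is to check that the diagonal coefficient $\big(T_{cab}\big)^{n-\widetilde m_2,\widetilde m_2}_{\widetilde m_2,0,n-\widetilde m_2}$ in the denominator of (\ref{eq:Yrec}) is non-zero, so that the forward substitution is well defined. A direct evaluation via (\ref{eq:lambda}) expresses this entry as a product of Pochhammer symbols in the scaling dimensions (with the intermediate dimension $\eps$ determined by $\widetilde m_2$), and is hence generically non-vanishing; on the exceptional locus where it degenerates, one invokes the regularization by small imaginary shifts of $a,b,c$ described at the beginning of the Appendix, so that (\ref{eq:Yrec}) is read as a meromorphic identity in $(a,b,c)$.
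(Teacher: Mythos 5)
Your proposal is correct and follows essentially the same route as the paper: the paper's (implicit) proof is precisely the observation that evaluating (\ref{eq:y(n)}) on the triples $(k,0,n-k)$ yields a triangular system, thanks to the vanishing of $\big(T_{cab}\big)^{n-\widetilde m_2,\widetilde m_2}_{k,0,n-k}$ for $\widetilde m_2>k$, which is then solved by forward substitution. Your explicit Leibniz-rule check of the triangularity and your remark on regularizing the diagonal denominators match what the paper asserts in the enumerated points and the opening remarks of the Appendix.
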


To solve this recursion, we ``insert repeatedly this formula into itself'' and obtain:

\footnotesize
\begin{eqnarray}\label{recurs}
 \Big(Y _{abc} (n)\Big) _{\widetilde{m} _2} ^{m_2} & = & \sum \limits
 _{s=0} ^{\widetilde m_2}\frac{ \Big(T_{abc}\Big)^{n-m_2, m_2}_{s,0,n-s} }{\Big(T_{cab}\Big)
^{n-\widetilde{m} _2, \widetilde{m} _2}_{\widetilde{m} _2,0,n-\widetilde{m} _2}}\Bigg[(-1)^{\widetilde{m} _2 -s}\frac{(n-s)!}{(n-\widetilde{m} _2 )!}
\frac{\left(2c-(n-s) -1 \right)_{\widetilde{m} _2 -s}}
{\left(2a+2b -2\widetilde{m} _2 -3 \right)_{2\widetilde{m} _2 -2s}} 
\times \nonumber \\
&&  \hskip20mm\times \sum \limits _{\substack{\{j_l\}_{s} ^{\widetilde{m} _2}}} (-1)^{l-1}  
\prod_{j_r \in \{j_l\}^{\widetilde{m}_2} _s} \frac{(2a+2b -2j_{r+1} -3)_{j_{r+1} - j_{r}}}{(j_{r+1} - j_{r})!}\Bigg] \nonumber \\
\end{eqnarray}
\normalsize
where $\{j_l\}_{s} ^{m}$ are the possible sets $\{j_1=s,\, j_k <j_{k+1},\,j_l=m\}$, including $\{s,\, m\}$.
\begin{claim}
Extrapolating from calculations for small $l$, we found an explicit
identity to perform the multiple sum:
 
\footnotesize
\begin{equation}
\sum\limits_{\{j_l\}^{m}_s} (-1)^{l-1} \prod_{j_r \in \{j_l\}^{m}_s} 
\frac{(2a+2b -2j_{r+1} -3)_{j_{r+1} - j_{r}}}{(j_{r+1} - j_{r})!} = (-1)^{m-s} \frac{(2a+2b -2m  -3)
(2a+2b -m -s-2)_{m -s-1}}{(m -s)!}
\end{equation}
\normalsize
\end{claim}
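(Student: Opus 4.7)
My strategy is to prove the claim by induction on $N := m-s$. To set this up cleanly, I introduce $u := 2a+2b-2s-3$ and absorb the shift by writing $\tilde j_r := j_r - s$, so that $(2a+2b-2j_{r+1}-3) = u - 2\tilde j_{r+1}$. With these variables, both sides depend only on $u$ and $N$; denote the LHS as $\Phi(N;u)$ and the RHS as $R(N;u) = (-1)^N(u-2N)(u-N+1)_{N-1}/N!$.

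The key structural observation is that grouping compositions $\{s = j_1 < j_2 < \cdots < j_l = m\}$ according to the second-to-last bracket point $j_{l-1}$ yields the recursion
\[
\Phi(N;u) = -\sum_{K=0}^{N-1} \Phi(K;u)\,\frac{(u-2N)_{N-K}}{(N-K)!}, \qquad \Phi(0;u):=1,
\]
where the sign arises because removing the last point converts $(-1)^{l-1}$ into $-(-1)^{(l-1)-1}$, and the extracted factor is the contribution of the final jump $m - j_{l-1}$. Equivalently, $\Phi(N;u)$ is the $(0,N)$-entry of the inverse of the upper triangular matrix $M_{ij} = (u-2j)_{j-i}/(j-i)!$, and the recursion is the triangular-system defining equation for its inverse.

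Proceeding by induction: the base $N=1$ is immediate, since both sides equal $-(u-2)$. For the inductive step, substitute $R(K;u)$ for $\Phi(K;u)$ with $K<N$ in the recursion; it suffices to verify the polynomial identity
\[
R(N;u) + \sum_{K=0}^{N-1} R(K;u)\,\frac{(u-2N)_{N-K}}{(N-K)!} = 0.
\]
Every term carries the factor $(u-2N)$ — the first explicitly, the others because $(u-2N)_{N-K}$ starts with $u-2N$ — so after dividing out $(u-2N)$ one has a polynomial identity of degree $N-1$ in $u$. Using the compact rewriting $R(N;u) = \frac{(-1)^N(u-2N)}{N}\binom{u-1}{N-1}$, the residual identity takes the shape of a binomial convolution involving the Pochhammer products $(u-K+1)_{K-1}$ and $(u-2N+1)_{N-K-1}$.

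The main obstacle is closing this final polynomial identity. A robust but unilluminating route is polynomial interpolation: since both sides have degree $\leq N-1$ in $u$, it suffices to check equality at $N$ well-chosen values, e.g. $u \in \{1, 2, \ldots, N-1\}$ where $(u-N+1)_{N-1}$ has an explicit zero, plus one extra point. A more structural attack would apply a Chu--Vandermonde-type contraction in the $K$-sum, exploiting the factorization $(u-2K)(u-K+1)_{K-1}$ against $(u-2N+1)_{N-K-1}$. Such a contraction is plausible because the identity is ultimately the assertion that a specific upper triangular matrix of binomial coefficients has a rank-one-perturbation structure in its inverse, a phenomenon familiar from matrix-inversion lemmas à la Krattenthaler.
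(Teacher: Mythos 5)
Your reduction is sound as far as it goes, and it is worth noting that the paper itself offers \emph{no} proof of this Claim --- it is explicitly presented as an identity ``extrapolated from calculations for small $l$.'' Your observation that the multiple sum $\Phi(N;u)$ over chains is exactly the solution of the triangular recursion obtained by conditioning on the second-to-last point $j_{l-1}$ (equivalently, the $(0,N)$-entry of the inverse of the matrix $M_{ij}=(u-2j)_{j-i}/(j-i)!$) is correct, and it correctly reduces the Claim to verifying that the proposed closed form $R(N;u)$ satisfies
\begin{equation*}
R(N;u)+\sum_{K=0}^{N-1}R(K;u)\,\frac{(u-2N)_{N-K}}{(N-K)!}=0 .
\end{equation*}
Your base case and the extraction of the common factor $(u-2N)$ are also correct (I checked the resulting identity directly for $N=2$, where the bracket collapses to zero).

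However, there is a genuine gap: the residual degree-$(N-1)$ polynomial identity
\begin{equation*}
(-1)^N\frac{(u-N+1)_{N-1}}{N!}+\sum_{K=0}^{N-1}\frac{(-1)^K(u-2K)(u-K+1)_{K-1}}{K!}\cdot\frac{(u-2N+1)_{N-K-1}}{(N-K)!}=0
\end{equation*}
is never established. You name two possible strategies --- interpolation at $N$ points and a Chu--Vandermonde-type contraction --- but execute neither. Interpolation is not free of content here: at the proposed points $u\in\{1,\dots,N-1\}$ only the single factor $(u-N+1)_{N-1}$ visibly vanishes, while the remaining sum over $K$ still has to be shown to vanish at each such $u$, which is itself a nontrivial finite-sum evaluation; and ``one extra point'' must also be computed explicitly. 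The Chu--Vandermonde route is plausible (the summand is hypergeometric in $K$, so a $_3F_2$-type evaluation or a creative-telescoping certificate should close it), but as written the argument stops exactly where the work begins. Until that convolution identity is proved, your argument establishes only that the Claim is equivalent to it --- which is more structure than the paper provides, but not yet a proof.
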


Using this identity (with $m=\widetilde{m}_2$) we can reduce
(\ref{recurs}) to a single sum. We finally obtain
\begin{proposition}
The matrix $\big(Y _{abc} (n)\big) _{\widetilde{m} _2} ^{m_2}$ is
given by the following expression:

\footnotesize
\begin{eqnarray}
  \Big(Y _{abc} (n)\Big) _{\widetilde{m} _2} ^{m_2} & = & (-1)^{n-\widetilde{m}_2} \frac{{n \choose m_2}}{{n \choose \widetilde{m} _2}} \frac{(2-2b)_{m_2}}{(2-2b)_{\widetilde{m}_2}} \frac{1}{(2\widetilde{m}_2 - 2a-2b+4)_{n - \widetilde{m} _2}}\times  \\
&& \times \sum \limits _{s=0} ^{\widetilde{m}_2} 
\frac{{n-m_2 \choose s}(n+m_2-s- 2b-2c+4)_s(s-2a+2)_{n-m_2-s}{n-s \choose \widetilde{m} _2 -s}\left(n-\widetilde{m}_2 -2c+2 \right)_{\widetilde{m} _2 -s}}
{\left(2a+2b -2\widetilde{m} _2 - 2 \right)_{\widetilde{m} _2 -s}} \nonumber
\end{eqnarray}
\end{proposition}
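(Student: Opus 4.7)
The strategy is to combine the three preceding ingredients: the iterated form (\ref{recurs}) of the recursion (\ref{eq:Yrec}), the explicit expression (\ref{eq:T1}) for the composite intertwiners, and the summation Claim just proven. The recursion already expresses $(Y_{abc}(n))^{m_2}_{\widetilde m_2}$ as a multiple sum of ratios of coefficients of $T_{abc}$ and $T_{cab}$ evaluated on the special triples $(r_1,r_2,r_3)=(k,0,n-k)$; what remains is to evaluate those coefficients in closed form and collapse the multiple sum.

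First I would specialise (\ref{eq:T1}) to the triples $(s,0,n-s)$ for $(T_{abc})^{n-m_2,m_2}$ and $(\widetilde m_2,0,n-\widetilde m_2)$ for $(T_{cab})^{n-\widetilde m_2,\widetilde m_2}$. Because the middle derivative index is zero, in the Leibniz expansion $\partial^q(\partial^{s'}g\cdot\partial^t h)=\sum_k\binom{q}{k}\partial^{s'+k}g\cdot \partial^{q+t-k}h$ only the term with $s'=0$ and $k=0$ contributes, so the inner summation in (\ref{eq:T1}) collapses to a single term. This yields closed-form expressions for the two coefficients as products of three Pochhammer symbols divided by factorials, with a definite overall sign $(-1)^{m_2+n-s}$ (respectively $(-1)^{\widetilde m_2}$).

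Second, I would substitute these evaluations into (\ref{recurs}) and invoke the Claim with $m=\widetilde m_2$ to replace the nested sum over chains $\{j_l\}^{\widetilde m_2}_s$ by the single explicit expression on the right-hand side of the Claim. At this point the matrix element becomes a single sum over $s\in\{0,\dots,\widetilde m_2\}$ of a product of Pochhammer symbols, binomial coefficients, and a fixed denominator. I would then extract everything independent of $s$ in front of the sum, and apply the standard Pochhammer identities $(x)_{n+k}=(x)_n(x+n)_k$ and the reflection $(x)_n=(-1)^n(1-x-n)_n$ to transform factors like $(2a-m_1-1)_q$ and $(2c-(n-s)-1)_{\widetilde m_2-s}$ into the ascending-argument forms $(s-2a+2)_{n-m_2-s}$ and $(n-\widetilde m_2-2c+2)_{\widetilde m_2-s}$ that appear in the proposed formula. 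The remaining factorials then regroup into the binomial coefficients $\binom{n}{m_2}/\binom{n}{\widetilde m_2}$, $\binom{n-m_2}{s}$ and $\binom{n-s}{\widetilde m_2-s}$, and the ratio $(2-2b)_{m_2}/(2-2b)_{\widetilde m_2}$ appears naturally from combining the $(2b-m_2-1)_{\cdot}$ factors coming from the two intertwiner coefficients.

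The main obstacle I anticipate is the bookkeeping in the last step: several Pochhammer symbols have arguments that are linear in $s$, $\widetilde m_2$, $m_2$, $n$, $a$, $b$, $c$ with differing conventions, and one must combine the sign $(-1)^{q+t}$ from (\ref{eq:T1}) with the sign $(-1)^{\widetilde m_2-s}$ coming from the Claim, to land on the stated $(-1)^{n-\widetilde m_2}$. Because the denominator $(T_{cab})^{n-\widetilde m_2,\widetilde m_2}_{\widetilde m_2,0,n-\widetilde m_2}$ of the recursion can vanish at exceptional values of the dimensions, all these manipulations are to be understood as identities of meromorphic functions of $a,b,c$, consistent with the regularisation policy described at the end of Sect.\ \ref{sec:RJI}. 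As a safeguard I would first verify the formula directly in the cases $n=1$ and $n=2$ (for all admissible $m_2,\widetilde m_2$), which are short enough to solve the recursion and evaluate the closed form by hand; once this fixes all overall normalisations and signs, the general identity reduces to the Pochhammer manipulations described above.
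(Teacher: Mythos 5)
Your proposal follows exactly the paper's own route: evaluate the intertwiner coefficients of (\ref{eq:T1}) on the middle-zero triples $(s,0,n-s)$ (where the Leibniz expansion collapses to a single term and yields the triangularity that drives the recursion), iterate (\ref{eq:Yrec}) into the multiple sum (\ref{recurs}), collapse the chain sum via the Claim with $m=\widetilde m_2$, and finish with Pochhammer and binomial bookkeeping understood as identities of meromorphic functions of the dimensions. The only caveat is that the Claim is not actually proven in the paper --- it is only extrapolated from calculations for small $l$ --- so your phrase ``just proven'' overstates what is available; apart from that the two arguments coincide.
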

This expression presumably cannot be further simplified, since the sum
does not factorize in general as a rational function of the dimensions.

We observed the following interesting property of the matrix $\big(Y _{abc} (n)\big)_{\widetilde{m} _2} ^{m_2}$:

\begin{proposition}
The entries from an arbitrary column of the matrix $\big(Y _{abc} (n)\big) _{\widetilde{m} _2} ^{m_2}$ sum to $(-1)^{n+\widetilde{m}_2}$, where $\widetilde{m}_2$ is the number of the column.
\end{proposition}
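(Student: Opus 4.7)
The plan is to reduce the column-sum identity to an operator-level associativity statement, and then to prove that by direct computation. Using the relation (\ref{X=YI}), which in relaxed-index form reads $\big(X_{abc}(n)\big)^{m_2}_{\widetilde m_2} = (-1)^{n+\widetilde m_2}\big(Y_{abc}(n)\big)^{m_2}_{\widetilde m_2}$, the proposition is equivalent to the cleaner statement that every column of $X_{abc}(n)$ sums to $1$. From the defining relation $\big(T_{abc}\big)^{m_1 m_2}(f,g,h) = \big(X_{abc}\big)^{m_1 m_2}_{\widetilde m_1 \widetilde m_2}\big(T_{S,abc}\big)^{\widetilde m_1 \widetilde m_2}(f,g,h)$, together with the block structure of $X$ in $n = m_1+m_2$, summing both sides over $m_2$ turns this into the functional identity
\[
\sum_{m_2=0}^n \big(T_{abc}\big)^{n-m_2, m_2}(f,g,h) \;=\; \sum_{\widetilde m_2=0}^n \big(T_{S,abc}\big)^{n-\widetilde m_2, \widetilde m_2}(f,g,h),
\]
valid for all test triples $(f,g,h)$. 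Its content is that the sum over all admissible intermediate dimensions $\epsilon = b+c-m_2-1$ of the right-nested intertwiners $\lambda^e_{a\epsilon}\circ(1\otimes\lambda^{\epsilon}_{bc})$ coincides with the analogous sum over $\widetilde\epsilon = a+b-\widetilde m_2 -1$ of the left-nested intertwiners $\lambda^e_{\widetilde\epsilon c}\circ(\lambda^{\widetilde\epsilon}_{ab}\otimes 1)$, at fixed output dimension $e = a+b+c-n-2$---a ``graded associativity'' of the system of local intertwiners, summed over all admissible intermediate representations.

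I would prove this functional identity by induction on $\widetilde m_2$ applied not to the entries of $Y$ themselves but to their column sums $S_{\widetilde m_2} := \sum_{m_2}\big(Y(n)\big)^{m_2}_{\widetilde m_2}$. Summing the recursion formula (\ref{eq:Yrec}) of the preceding proposition over $m_2$ and substituting the inductive hypothesis $S_j = (-1)^{n+j}$ for $j<\widetilde m_2$, the inductive step reduces to a single numerical identity per $\widetilde m_2$, namely
\[
\sum_{j=0}^n (-1)^{n+j}\big(T_{cab}\big)^{n-j,j}_{\widetilde m_2,0,n-\widetilde m_2} \;=\; \sum_{m_2=0}^n\big(T_{abc}\big)^{n-m_2,m_2}_{\widetilde m_2,0,n-\widetilde m_2}
\]
(terms with $j>\widetilde m_2$ on the left vanish by the triangularity of $\big(T_{cab}\big)^{m_1 m_2}_{k,0,n-k}$ already used in Sect.~\ref{sec:app} to set up the recursion). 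After extracting the explicit coefficients from (\ref{eq:Vt}) by setting the middle argument of each $T$ to $1$, this becomes a hypergeometric identity in Pochhammer symbols which exhibits a manifest $a\leftrightarrow c$ symmetry under the change of summation variable $j\mapsto n-\widetilde m_2-j$, and collapses via the Chu--Vandermonde convolution $\sum_k\binom{N}{k}(x)_k(y)_{N-k}=(x+y)_N$.

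An alternative route, bypassing the recursion, would evaluate the functional identity on exponential test functions $f=e^{iux}$, $g=e^{ivx}$, $h=e^{iwx}$, on which $\lambda^c_{ab}(e^{iux},e^{ivx}) = i^{a+b-c-1}e^{i(u+v)x}L^c_{ab}(u,-v)$ with the polynomial $L^c_{ab}(u,v):=\sum_{p+q=a+b-c-1}\frac{(c+b-a)_p(c+a-b)_q}{p!\,q!}u^p v^q$; after cancelling the common exponential, the claim becomes the polynomial identity
\[
\sum_\epsilon L^e_{a\epsilon}\bigl(u,-(v+w)\bigr)L^\epsilon_{bc}(v,-w) \;=\; \sum_{\widetilde\epsilon}L^{\widetilde\epsilon}_{ab}(u,-v)L^e_{\widetilde\epsilon c}(u+v,-w),
\]
tractable by induction on $n$ combined again with Chu--Vandermonde. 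In either route the hard part will be the hypergeometric bookkeeping required to collapse the intermediate sums to the desired closed form; no further input should be needed beyond the structural identities already available in the paper.
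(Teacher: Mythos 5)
Your main route is essentially the paper's own proof: induction on the column index $\widetilde m_2$, summing the recursion (\ref{eq:Yrec}) over $m_2$, and reducing the inductive step (with the base case $\widetilde m_2=0$ as the empty instance) to the identity $\sum_{j}(-1)^{n+j}\big(T_{cab}\big)^{n-j,j}_{\widetilde m_2,0,n-\widetilde m_2}=\sum_{m_2}\big(T_{abc}\big)^{n-m_2,m_2}_{\widetilde m_2,0,n-\widetilde m_2}$, which is exactly the paper's (\ref{eq:Srec}) and is settled there by the same Vandermonde-type Pochhammer identities you invoke. Your preliminary reformulation via $X_{abc}=(-1)^{M}\,Y_{abc}\,I$ --- that the claim is equivalent to the columns of $X_{abc}(n)$ summing to $1$, i.e.\ to a ``sum over all intermediate channels'' associativity of the local intertwiners --- is correct and a nice conceptual gloss the paper does not make explicit, but it does not alter the computation that has to be done.
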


\begin{proof}
We will prove this statement by induction on the number of the column.

Let us first consider the column $\widetilde{m} _2 = 0$. The entries from this column are expressed as:
\begin{equation}
 \Big(Y _{abc} (n)\Big) _{0} ^{m_2} = (-1)^n \frac{(2-2a)_{n-m_2} }{(n-m_2)!} \frac{(2-2b)_{m_2}}{m _2!}\frac{n!}{[4-2a-2b]_n}
\end{equation}
Then, using the property $\frac{(a+b)_n}{n!}=\sum_{i=0} ^n \frac{(a)_i}{i!}\frac{(b)_{n-i}}{(n-i)!}$ we compute $\sum _{m_2 =0} ^n \Big(Y _{abc} (n)\Big) _{0} ^{m_2}=(-1)^n$, i.e., the statement of the proposition holds for $\widetilde{m} _2 = 0$.

Now let us assume that $\sum _{m_2 =0} ^n \big(Y_{abc}
(n)\big)_{j}^{m_2}=(-1)^{n+j}$ is true for every $j \leq k-1$. We will
prove that then $\sum_{m_2 =0} ^n \big(Y_{abc} (n)\big)_{k}
^{m_2}=(-1)^{n+k}$. We start from formula (\ref{eq:Yrec}) and obtain:

\footnotesize
\begin{eqnarray}
\label{eq:ind0}
\sum _{m_2 =0} ^n \Big(Y _{abc} (n)\Big) _{k} ^{m_2} & = &  \frac{1}{\Big(T_{cab}\Big)
^{n-k, k}_{k,0,n-k}}\left\{\sum\limits _{m_2=0}^{n-k}\Big(T_{abc}\Big)^{n-m_2, m_2}_{k,0,n-k}-\sum\limits_{j=0} ^{k} (-1)^{n+j} \Big(T_{cab}\Big)^{n-j, j}_{k ,0,n-k}\right\} + (-1)^{n+k} \nonumber \\
\end{eqnarray}
\normalsize
Hence, we have to prove that the expression in the brackets vanishes. Let us write this expression explicitly:
\footnotesize
\begin{eqnarray}
\label{eq:Srec}
&&\sum\limits _{m=0}^{n-k}\Big(T_{abc}\Big)^{n-m, m}_{k,0,n-k}-\sum\limits_{j=0} ^{k} (-1)^{n+j} \Big(T_{cab}\Big)^{n-j, j}_{k ,0,n-k} = \nonumber \\
&&=(-1)^{n+k}\bigg\{ \sum\limits_{m = 0} ^{n-k} \frac{(2b+2c -(n+m)-3)_k}{k!} \frac{(2a - (n-m)-1)_{n-m-k}}{(n-m-k)!} \frac{(2b-m-1)_m}{m!} - \nonumber \\
&& \phantom{=(-1)^{n+k}}- \sum\limits_{m = 0} ^{k} \frac{(2a+2b -(n+m)-3)_{n-k}}{(n-k)!} \frac{(2c - (n-m)-1)_{k-m}}{(k-m)!} \frac{(2b-m-1)_m}{m!}\bigg\}
\end{eqnarray}
\normalsize
With the identities $\frac{(A+B+1)_n}{n!}=\sum_{j=0}^n
\frac{(A+j+1)_{n-j}}{{(n-j)}!}\frac{(B-j+1)_{j}}{j!}$ and
$(a)_{m+n}=(a)_m(a+m)_n= (a)_n(a+n)_m$ one can prove that the first
sum is equal to the second sum in (\ref{eq:Srec}), hence the bracket in (\ref{eq:ind0}) vanishes: 
\begin{eqnarray}
\label{eq:ind}
\sum_{m_2 =0} ^n \Big(Y_{abc} (n)\Big)_{k} ^{m_2} & = & (-1)^{n+k}\,.
\end{eqnarray}
This proves the induction hypothesis and the proposition.
\end{proof}

\end{appendix}

\bibliographystyle{jfm} 
\bibliography{BibliographyP}
%\bibliography{Bibliography}

\end{document}